 \newtheorem{thm}{Theorem}[section]
 \newtheorem{prop}{Proposition}[section]
 \newtheorem{lem}{Lemma}[section]
 \newtheorem{exm}{Exemple}[section]
 \newtheorem{dfn}{Definition}[section]
 \newtheorem{rem}{Remark}[section]
 \numberwithin{equation}{section}
\newcommand{\biindice}[3]%
{

\begin{array}[t]{c}
#1\\
{\scriptstyle #2}\\
{\scriptstyle #3}
\end{array}

}
\date{}
\begin{document}
\title{Bohr-Sommerfeld Quantization Rules for 1-D Semiclassical Pseudo-Differential Operator: the Method of Microlocal Wronskian and Gram Matrix}
\author{Abdelwaheb Ifa}
\maketitle
\centerline{University of Tunis El Manar, Faculty of Sciences of Tunis, Research Laboratory of Partial}
\centerline{Differential Equations (LR03ES04), Campus Universitaire El-Manar, 2092 El Manar Tunis}
\centerline{\& University of Kairouan, Higher Institute of Applied Sciences and}
\centerline{Technology of Kairouan, 3100 Kairouan, Tunisia}
\centerline{email: \url{abdelwaheb.ifa@fsm.rnu.tn}}
\begin{abstract}
In this paper, we revisit the well known Bohr-Sommerfeld quantization rule (BS) of order 2 for a self-adjoint 1-D semiclassical pseudo-differential operator, within the algebraic and microlocal framework of B. Helffer and J. Sj\"{o}strand. BS holds precisely when the Gram matrix consisting of scalar products of certain WKB solutions with respect to the "flux norm" is not invertible. This condition is obtained using the microlocal Wronskian and does not rely on traditional matching techniques. It is simplified by using action-angle variables. The interest of this procedure lies in its possible generalization to matrixvalued Hamiltonians, like BdG Hamiltonian.
\end{abstract}
\section{Introduction}\label{S1}
Bohr-Sommerfeld quantization rule, in its first formulation, allows to compute the energy levels $E$ of a particle in a one-dimensional potential well, its dynamics being described by the semi-classical Schr\"{o}dinger operator
$$P(x,hD_{x})=(hD_{x})^2+V(x),\quad \text{with } D_x = -i \frac{d}{dx}.$$
It is given at first order in $h$ by the well known formula
$$\frac{1}{2\pi h}\,\oint_{\gamma_E}\xi(x)\,dx=n+\frac{1}{2}$$
Here $\xi(x)=\sqrt{E-V(x)}$ denotes the momentum of the particle on its orbit $\gamma_E\subset T^*\mathbb{R}$ above the potential well $\{V(x)\leq E\}$, $n$ is an integer and the integral is computed over $\gamma_E$ in the phase space $T^{*}\mathbb{R}$.

By the implicit function theorem, we then find $E=E_n(h)$. In other words, the number of wavelengths (associated with the particle along $\gamma_E$ by de Broglie correspondence) must be an integer plus 1/2, called \emph{Maslov correction}.

Let \(p(x,\xi;h)\) be a smooth real classical Hamiltonian on \(T^*\mathbb{R}\), admitting a semiclassical expansion
\[p(x,\xi;h)\sim p_0(x,\xi)+h\,p_1(x,\xi)+h^{2}\,p_2(x,\xi)+\cdots,\quad h\to0,\]
where \(p_0\) is the principal symbol, and \(p_1\) the sub-principal symbol. We assume that \(p\in S^0(m)\) for some order function \(m\), and that \(p+i\) is elliptic. Here, \(S^0(m)\) denotes the class of symbols satisfying
\[|\partial_x^\alpha \partial_\xi^\beta p(x,\xi;h)|\leq C_{\alpha,\beta} m(x,\xi),\quad\forall\,\alpha,\beta\in\mathbb{N},\,(x,\xi)\in T^*\mathbb{R},\]
for some order function \(m(x,\xi)\geq 1\), uniformly in \(h\in(0, h_0]\). This allows to take the Weyl quantization of \(p\), namely
\begin{equation}\label{A01WEyLOO}
P(x,hD_x;h)u(x; h)=\big(Op^{W}(p)(u)\big)(x)=(2\,\pi\,h)^{-1}\,\int\int_{\mathbb{R}\times \mathbb{R}}
e^{\frac{i}{h}\,(x-y)\,\eta}\,p\big(\frac{x+y}{2},\eta\big)\,u(y)\,dy\,d\eta.
\end{equation}
We make the geometric assumption (H) of~\cite{CdV1}: fix a compact interval \(I=[E_-, E_+]\), and assume that there exists a topological ring \(\mathcal{A}\subset p_0^{-1}(I)\), such that \(\partial \mathcal{A}=A_-\cup A_+\), with \(A_{\pm}\) connected components of \(p_0^{-1}(E_{\pm})\), and that \(p_0\) has no critical point in \(\mathcal{A}\). Moreover, \(A_-\) is included in the disk bounded by \(A_+\). (if it is not the
case, we can always change $p$ to \(-p\)). These conditions guarantee that the spectrum of \(P\) in \(I\) is discrete and can be described semiclassically.

We define the microlocal well \(W\) as the disk bounded by \(A_+\). For each \(E\in I\), let \(\gamma_E \subset W\) be a periodic orbit on the energy surface \(\{ p_0(x,\xi)=E\}\), so that \(\gamma_E\) is an embedded Lagrangian manifold.

Let $\mathcal{K}_{h}^{N}(E)$ be the microlocal kernel of $P-E$ of order $N$, i.e. the space of local solutions of $(P-E)u=\mathcal{O}(h^{N+1})$ in the distributional sense, microlocalized on $\gamma_{E}$. This is a smooth complex vector bundle over $\pi_{x}(\gamma_{E})$. Here we address the problem of finding the set of $E=E(h)$ such that $\mathcal{K}_{h}^{N}(E)$ contains a global section, i.e. of constructing a sequence of quasi-modes $\big(u_{n}(h), E_{n}(h)\big)$ of a given order $N$. As usual we denote by $\mathcal{K}_{h}(E)$ the microlocal kernel of $P-E$ mod $\mathcal{O}(h^{\infty})$; since the distinction between $\mathcal{K}_{h}^{N}(E)$ and $\mathcal{K}_{h}(E)$ plays no important role here, we shall content to write $\mathcal{K}_{h}(E)$.

Then if $E_{+}<E_{0}=\displaystyle\liminf_{|x,\xi|\rightarrow +\infty}p_{0}(x,\xi)$, all eigenvalues of $P$ in $I$ are indeed by \emph{Bohr-Sommerfeld quantization condition} (BS) $\mathcal{S}_{h}\big(E_{n}(h)\big)=2\pi n h$, where the semiclassical action $\mathcal{S}_{h}(E)$ has the asymptotics
$$\mathcal{S}_{h}(E)\sim S_{0}(E)+h\,S_{1}(E)+h^{2}\,S_{2}(E)+\cdots$$
We determine BS at any accuracy by computing quasi-modes. There are a lot of ways to derive BS: the method of matching of WKB solutions \cite{Bender}, known also as Liouville-Green method \cite{Olver}, which has received many improvements \big(see \cite{Yafaev}\big), the method of the monodromy operator \big(see \cite{HR84} and references therein\big), the method of quantization deformation based on Functional Calculus and Trace Formulas \cite{Littlejohn0}, \cite{CdV1}, \cite{CARGooGR}, \cite{Gracia00Saz}, \cite{Argyres}.

Note that the latter one already assumes BS, it only gives a very convenient way to derive it. In the real analytic case, BS rule, and also tunneling expansions, can be obtained using the so-called "exact WKB method" see e.g. \cite{Fedoriouk}, \cite{DePh}, \cite{DeDiPh} when $P(x,hD_x)=(hD_x)^{2}+V(x)$ is Schr\"{o}dinger operator.
\paragraph{}
Here we present another way to construct quasi-modes of order 2, based on \cite{JO22025}, \cite{HARPE0RIII}. We stress that our method in the present scalar case, when carried to second order, is a bit more intricated than \cite{Littlejohn0}, \cite{CdV1} and its refinements \cite{Gracia00Saz}; it is most useful for matrix valued operators with double characteristics such as Bogoliubov-de Gennes Hamiltonian \big(\cite{BenIfaRo}, \cite{BeMhRo}, \cite{Duncan}\big), or Born-Oppenheimer type Hamiltonians \big(\cite{Baklouti}, \cite{Rouleux}\big).
\begin{exm}(BS quantization rule for the Harmonic Oscillator on $\mathbb{R}$)\ \\
We consider the one-dimensional quantum harmonic oscillator given by the semiclassical operator:
$$P_{0}(x,hD_{x})=\frac{1}{2}\,\left(x^2+(h D_x)^2\right)$$
where $h$ is the semiclassical parameter. The associated classical Hamiltonian (principal symbol) is:
$$p_0(x, \xi)=\frac{1}{2}\,(x^2+\xi^2)$$
In the phase space $(x, \xi)$, the energy surface $\{p_0(x, \xi) = E\}$ is a circle of radius $\sqrt{2E}$. The classical motion along this orbit is described by:
\[x(t)=\sqrt{2E}\,\cos(t),\quad \xi(t)=-\sqrt{2E}\,\sin(t),\]
which is a closed trajectory with period $T=2\pi$.
To apply the Bohr-Sommerfeld quantization rule, we compute the classical action $S_0(E)$, which is the integral of the momentum along the closed orbit $\gamma_E$ at energy $E$:
\[S_0(E)=\oint_{\gamma_E}\xi\,dx.\]
Using the parametrization of the orbit, we compute:
\begin{align*}
S_0(E)
&=\int_0^{2\pi}\xi(t)\,\dot{x}(t)\,dt \\
&=\int_0^{2\pi}\left(-\sqrt{2E}\sin t\right)\cdot\left(-\sqrt{2E}\sin t\right)\,dt \\
&=2E \int_0^{2\pi}\sin^2 t\,dt\\
&=2E\cdot\pi=2\pi E.
\end{align*}
The Bohr-Sommerfeld quantization condition, including the Maslov correction, reads:
\[S_0(E)+h\,S_1(E)=2\pi h n,\quad n\in\mathbb{N}.\]
For the harmonic oscillator, the Maslov index is $\mu = 2$, which gives:
\[S_1(E)=-\frac{\pi}{2}.\]
This leads to the equation:
\[2\pi E-\pi h=2\pi h n \quad \Rightarrow \quad E=h\left(n + \frac{1}{2}\right).\]
Therefore, the quantized energy levels of the harmonic oscillator are:
\[E_n = h\left(n + \frac{1}{2}\right), \quad n = 0, 1, 2, \ldots\]
These are exactly the energy levels of the quantum harmonic oscillator.
\end{exm}
\section{The microlocal Wronskian}\label{S2}
The best algebraic and microlocal framework for computing 1-D quantization rules in the self-adjoint case, developed in the fundamental works of \cite{JO22025}, \cite{HARPE0RIII}, is based on Fredholm theory and the classical \emph{positive commutator method}, which involves conservation of a quantity called \emph{quantum flux}.
\paragraph{}
Bohr-Sommerfeld quantization rules are derived by constructing quasi-modes using the WKB approximation along a closed Lagrangian manifold \(\Lambda_E\subset \{p_0=E\}\), i.e. a periodic orbit of the Hamiltonian vector field \(H_p\) with energy \(E\). This construction is local and depends on the rank of the projection \(\Lambda_E \rightarrow\mathbb{R}_x\).
\paragraph{}
Thus, the set \(K_h(E)\) of microlocal solutions to \((P-E)u=0\) along \(\Lambda_E\) can be seen as a bundle over $\mathbb{R}$ with a compact base, corresponding to the classically allowed region at energy \(E\). The eigenvalues \(E_n(h)\) are then determined by the condition that the global quasi-mode obtained by gluing local WKB solutions along \(\Lambda_E\) is singlevalued, i.e. that \(K_h(E)\) has trivial holonomy.
\paragraph{}
Assuming \(\Lambda_E\) is smoothly embedded in \(T^*\mathbb{R}\), it can always be parametrized by a non-degenerate phase function. Of particular interest are the \emph{focal points}, i.e. critical points of the phase functions, which are responsible for the change in Maslov index. A point \(a_{E}=(x_E, \xi_E)\in\Lambda_E\) is a focal point if \(\Lambda_E\) "turns vertical" at \(a_{E}\), meaning that the tangent space \(T_{a_{E}}\Lambda_E\) is no longer transverse to the fiber \(x=\text{const}\). in \(T^*\mathbb{R}\).
\paragraph{}
In any case however, \(\Lambda_E\) can locally be parametrized either by a phase function \(S(x)\) (spatial representation) or \(\widetilde{S}(\xi)\) (Fourier representation). We fix an orientation on \(\Lambda_E\) and for any point \(a\in\Lambda_E\) (not necessarily a focal point), we denote by \(\rho=\pm1\) the oriented segments near \(a\). Let \(\chi^{a}\in C_0^\infty(\mathbb{R}^2)\) be a smooth cut-off function equal to 1 near \(a\), and \(\omega^{a}_{\rho}\) a small neighborhood of \(\operatorname{supp}[P,\chi^{a}]\cap\Lambda_E\) near \(\rho\). Here, \(\chi^{a}\) holds for \(\chi^{a}(x,hD_x)\) as in (\ref{A01WEyLOO}), and we shall equally write $P(x,hD_{x})$ in spatial representation, or $P(-h D_\xi,\xi)$ in Fourier representation.
\begin{dfn}\label{A02}
Let \(P\) be self-adjoint and \(u_a,v_a\in K_h(E)\) be microlocal solutions supported on \(\Lambda_E\). We define the \emph{microlocal Wronskian} of \((u^{a}, \overline{v^{a}})\) near \(a\) in $\omega^{a}_{\rho}$ as
\begin{equation}\label{A03}
\mathcal{W}^{a}_{\rho}(u^{a},\overline{v^{a}})=\big(\frac{i}{h}\,[P,\chi^{a}]_{\rho}u^{a}|v^{a}\big)
\end{equation}
where \(\frac{i}{h}[P,\chi^{a}]_{\rho}\) denotes the part of the commutator supported microlocally on \(\omega^{a}_{\rho}\).
\end{dfn}
To clarify the meaning of this definition, consider the Schrödinger operator \(P(x,hD_{x})=(hD_{x})^{2}+V(x)\), with \(x_E=0\), and take \(\chi\) to be the Heaviside step function \(\chi(x)\). Then, in the distributional sense,
\[\frac{i}{h}[P,\chi]=-i\,h\,\chi''+2\,\chi'\,hD_x=-i\,h\,\delta'+2\,\delta\,hD_x,\]
so that $\big(\displaystyle {i\over h}[P,\chi]u|u\big)=-ih\big(u'(0)\overline{u(0)}-u(0)\overline{u'(0)}\big)$
which is the usual Wronskian of $(u, \overline{u})$.
\begin{prop}
Let \(u^{a}, v^{a}\in K_{h}(E)\), and denote by \(\widehat{u}\) the $h$-Fourier (unitary) transform of \(u\). Then:
\begin{equation}\label{A04}
\mathcal{W}=\big(\frac{i}{h}\,[P,\chi^{a}] u^{a}|v^{a}\big)=\big(\frac{i}{h}\,[P,\chi^{a}] \widehat u^{a}|\widehat v^{a}\big)=0
\end{equation}
and
\begin{equation}\label{A05}
\mathcal{W}^{a}_{+}(u^{a},\overline{v^{a}})=-\mathcal{W}^{a}_{-}(u^{a},\overline{v^{a}})
\end{equation}
\big(all equalities being understood mod $\mathcal{O}(h^{\infty})$, resp $\mathcal{O}(h^{N+1})$ when considering $u^{a}, v^{a}\in K_{h}(E)\big)$. Moreover, \(\mathcal{W}^{a}_{\rho}(u^{a}, \overline{v^{a}})\) does not depend modulo \(\mathcal{O}(h^\infty)\) \big(resp. $\mathcal{O}(h^{N+1})$\big) on the choice of \(\chi^{a}\) above.
\end{prop}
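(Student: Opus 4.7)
The plan rests on self-adjointness of $P$ combined with the microlocal kernel condition $(P-E)u^a,(P-E)v^a=\mathcal{O}(h^\infty)$ on $\text{supp}\,\chi^a\cap\Lambda_E$. The first step is to expand the commutator and transfer $P$ by self-adjointness: $([P,\chi^a]u^a|v^a)=(\chi^a u^a|(P-E)v^a)-(\chi^a(P-E)u^a|v^a)$, using $(Pf|g)=(f|Pg)$ and $E\in\mathbb{R}$. Both terms on the right are $\mathcal{O}(h^\infty)$, and multiplying by $i/h$ still gives $\mathcal{W}=\mathcal{O}(h^\infty)$. For the Fourier side, the $h$-Fourier transform $\mathcal{F}_h$ is unitary, intertwines $P(x,hD_x)$ with $P(-hD_\xi,\xi)$ (still self-adjoint and Weyl-quantized), and sends $K_h(E)$ to the microlocal kernel of the conjugated operator at the same $E$; the very same computation in the $\xi$-representation yields the second equality of $(\ref{A04})$.

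To establish $(\ref{A05})$, I would decompose the support of $\{p_0,\chi^a\}$---the principal symbol of $\frac{i}{h}[P,\chi^a]$---on $\Lambda_E$ near $a$ into its two oriented branches $\rho=\pm1$, and use a microlocal partition of unity subordinate to $\omega^a_+$ and $\omega^a_-$ to write $\frac{i}{h}[P,\chi^a]\equiv\frac{i}{h}[P,\chi^a]_++\frac{i}{h}[P,\chi^a]_-$ modulo $\mathcal{O}(h^\infty)$ on $K_h(E)$. The vanishing of $\mathcal{W}$ then forces $\mathcal{W}^a_++\mathcal{W}^a_-=\mathcal{O}(h^\infty)$. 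For the independence from $\chi^a$, I would let $\chi_1,\chi_2$ be two admissible cutoffs and reduce to the case where they differ only in an annulus on one side of $a$ (since $[P,\chi^a]_\rho$ is microlocally localized near $\omega^a_\rho$ and depends only on $\chi^a$ there). Then $\psi:=\chi_1-\chi_2$ vanishes near $a$ and away from that annulus, and the self-adjointness argument of the first step---which never required $\chi^a$ to take the value $1$ anywhere---gives $(\frac{i}{h}[P,\psi]u^a|v^a)=\mathcal{O}(h^\infty)$, hence $\mathcal{W}^a_\rho(\chi_1)\equiv\mathcal{W}^a_\rho(\chi_2)$ modulo $\mathcal{O}(h^\infty)$.

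The main technical hurdle is justifying the microlocal splitting $[P,\chi^a]\equiv[P,\chi^a]_++[P,\chi^a]_-$ modulo $\mathcal{O}(h^\infty)$ on $K_h(E)$, together with the analogous microlocal localization used in the independence argument. These both rest on pseudolocality of the semiclassical Weyl calculus applied to the disjoint subsets $\omega^a_\pm\subset\Lambda_E$; the delicate point is to ensure that all error terms are uniformly $\mathcal{O}(h^\infty)$, and that the reduction to cutoffs differing only on one branch of $\Lambda_E$ does not generate spurious contributions on the other branch. Everything else---the vanishing, the sign relation, and the Fourier counterpart---is an immediate consequence of self-adjointness and the microlocal kernel equation once these microlocalization steps are set up cleanly.
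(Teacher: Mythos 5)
Your argument is correct and runs along the same lines as the paper's: Plancherel for the equality of spatial and Fourier Wronskians, self-adjointness plus $(P-E)u^a,(P-E)v^a=\mathcal{O}(h^\infty)$ for the vanishing of $\mathcal{W}$, and the tautological decomposition of $[P,\chi^a]$ into its $\rho=\pm$ microlocal pieces for the sign relation. The one place you diverge is worth noting: the paper runs the self-adjointness computation on the explicit WKB representatives, which forces a case split depending on whether $a$ is a focal point (WKB in the spatial vs.\ Fourier picture); your formulation works directly at the $L^2$/operator level, where the identity $([P,\chi^a]u^a|v^a)=(\chi^a u^a|(P-E)v^a)-(\chi^a(P-E)u^a|v^a)$ is representation-free, so the focal/non-focal dichotomy never arises. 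This is a genuine simplification. You also supply an argument for the $\chi^a$-independence --- reduce to cutoffs differing on one branch, observe that $[P,\chi_1-\chi_2]$ is then microlocalized on that branch so that it coincides with its own $\rho$-component, and apply the self-adjointness identity (which, as you correctly note, never used $\chi^a\equiv 1$ near $a$). The paper asserts this independence but does not spell it out. Your caveat that the microlocal splitting $[P,\chi^a]\equiv[P,\chi^a]_++[P,\chi^a]_-$ on $K_h(E)$ needs justification is accurate, but it is essentially built into Definition~2.1 and the fact that $u^a,v^a$ are microlocalized on $\Lambda_E$, so $[P,\chi^a]u^a$ has wavefront set in $\operatorname{supp}\{p_0,\chi^a\}\cap\Lambda_E=\omega^a_+\cup\omega^a_-$ up to $\mathcal{O}(h^\infty)$; no further subtlety is hiding there.
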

\begin{proof}
Since $u^{a}, v^{a}\in K_{h}(E)$ are distributions in $L^{2}$, the first equality (\ref{A04}) follows from the Plancherel formula and the regularity of microlocal solutions in $L^{2}$, $p+i$ being elliptic. If $a$ is not a focal point,  $u^{a}, v^{a}$ are smooth WKB solutions near $a$, so we can expand the commutator in $\mathcal{W}=\big(\frac{i}{h}\,[P,\chi^{a}] u^{a}|v^{a}\big)$ and use that $P$ is self-adjoint to show that $\mathcal{W}=\mathcal{O}(h^{\infty})$. If $a$ is a focal point, $u^{a}, v^{a}$ are smooth WKB solutions in Fourier representation, so again $\mathcal{W}=\mathcal{O}(h^{\infty})$. Then (\ref{A05}) follows from Definition \ref{A02}.
\end{proof}
\section{Second-order BS quantization for a self-adjoint 1-D $h$-PDO}\label{S3}
We apply the method of the microlocal Wronskian and Gram matrix,
to derive BS quantization conditions at order 2 for a $h$-PDO of the type (\ref{A01WEyLOO}). To simplify, we assume that the principal symbol $p_0$ contains only two focal points along the classical orbit $\gamma_E$, but it is clear that by matching together microlocal solutions, the result does not depend on this simplification.

In fact, BS depend on the geometry of $\gamma_E$ only through its Maslov index, which equals 2 when $\gamma_E$ is a smooth embedded Lagrangian submanifold. The case where $\gamma_E$ is not a submanifold (for example, homeomorphic to the figure-eight) and has Maslov index 0, is not considered here (see~\cite{JO22025} or \cite{YCVPAR1}). (Recall that in dimension 1, the Maslov index, defined modulo 4, is an even number, hence either 0 or 2.)

Our main result is the following:
\begin{thm}
Let $P(x,h D_x;h)$ be a self-adjoint $h$-PDO, given as the Weyl quantization of a real classical symbol
\[p(x,\xi;h)\sim p_0(x,\xi)+h\,p_1(x,\xi)+h^2\,p_2(x,\xi)+\cdots\]
Assume that the geometry of $p_0$ satisfies the hypothesis (H) of Section \ref{S1}, and that $E_{+}<E_{0}=\displaystyle\liminf_{|x,\xi|\rightarrow +\infty}p_{0}(x,\xi)$. Then the spectrum of $P$ in a fixed energy interval $I\subset\mathbb{R}$ is discrete, and given by the BS quantization condition:
\[S_h(E):=S_0(E)+h\,S_1(E)+h^2\,S_2(E)+\cdots=2\pi n h,\quad n\in\mathbb{Z}\]
where:
\begin{itemize}
    \item $S_0(E)=\displaystyle\oint_{\gamma_E}\xi(x)\,dx=\displaystyle\int\int_{\{p_0\leq E\}\cap W}\,d\xi\wedge\,dx$ is the \emph{classical action} along the closed orbit $\gamma_E \subset \{p_0=E\}$;
    \item $S_1(E)=-\pi-\displaystyle\int_{\gamma_E} p_{1}\big(x(t),\xi(t)\big)\,dt$ is the \emph{first-order correction}, including the Maslov index and the integral of the subprincipal 1-form $p_{1}\,dt$;
    \item $S_2(E)$ is the \emph{second-order correction}, given by
    \[S_2(E)={1\over24}\,{d\over dE}\int_{\gamma_E}\Delta\,dt-\int_{\gamma_E}p_{2}\,dt-{1\over2}\,{d\over dE}\int_{\gamma_E}p_1^{2}\,dt\]
    with
    \[\Delta(x,\xi)=\frac{\partial^{2} p_{0}}{\partial x^{2}}\,\frac{\partial^{2} p_{0}}{\partial \xi^{2}}-\big(\frac{\partial^{2} p_{0}}{\partial x\,\partial \xi}\big)^{2},\]
    and $(x(t),\xi(t))$ is a parametrization of $\gamma_E$ by the Hamiltonian flow.
\end{itemize}
\end{thm}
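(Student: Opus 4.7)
The plan is to follow the microlocal-Wronskian strategy of Helffer--Sj\"ostrand: construct flux-normalized WKB solutions up to order $h^2$ along the orbit $\gamma_E$, match them across the two focal points in Fourier representation, and encode the monodromy condition as the vanishing of a $2\times 2$ Gram determinant built from the Wronskians of Definition \ref{A02}.

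First, I would fix the two focal points $a_{\pm}\in\gamma_E$ and the two open arcs $\gamma_E^{\pm}$ joining them. On each arc, away from focal points, I would solve $(P-E)u=\mathcal{O}(h^{3})$ by a WKB ansatz $u=a(x,h)\,e^{iS(x)/h}$ with $a\sim a_0+ha_1+h^2 a_2$. The eikonal equation $p_0(x,S'(x))=E$ selects $S'=\xi$; the first transport equation, normalized by the flux condition $\mathcal{W}^{a}_{\rho}(u,\overline{u})=\pm 1$, gives $|a_0|^{2}=|\partial_\xi p_0|^{-1}$; the second and third transport equations then express $a_1,\,a_2$ as explicit integrals along the Hamiltonian flow, involving $p_1$, $p_2$, and the second derivatives of $p_0$ that enter the invariant $\Delta$. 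Near each focal point $a_\pm$ the same construction is redone in Fourier representation using $P(-hD_\xi,\xi)$; the $h$-stationary phase applied to the Fourier transform of $u$ yields the Maslov factor $e^{-i\pi/4}$ at leading order and, at order $h^2$, curvature corrections proportional to $\partial_\xi^2 p_0$ at $a_\pm$ that will contribute to the $\Delta$ integral.

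Next, I would take two flux-normalized microlocal solutions $u^{+},u^{-}\in \mathcal{K}_{h}(E)$, supported near $\gamma_E^{+},\gamma_E^{-}$ respectively and continued across $a_{\pm}$ by the stationary-phase matching of the previous step. By Definition \ref{A02} together with (\ref{A04})--(\ref{A05}), the existence of a global section of $\mathcal{K}_{h}(E)$ over $\gamma_E$ is equivalent to the non-invertibility of the Gram matrix
\[ G(E)=\bigl(\mathcal{W}^{a_\pm}_{\rho}(u^{j},\overline{u^{k}})\bigr)_{j,k\in\{+,-\}}, \]
whose entries are the transport factors accumulated along each arc multiplied by the Maslov phases at $a_\pm$. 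A direct computation then shows that $\det G(E)=c(E,h)\bigl(1-e^{iS_h(E)/h}\bigr)$ with $c$ a non-vanishing semiclassical symbol, so that the quasi-mode condition is exactly $S_h(E)\in 2\pi h\mathbb{Z}$.

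Finally, I would read off the three terms of $S_h$ from the entries of $G$. The term $S_0$ comes directly from the eikonal phase (and rewrites as the enclosed symplectic area by Stokes), while $S_1=-\pi-\int_{\gamma_E} p_1\,dt$ is produced by the pair of Maslov phases at $a_\pm$ together with the first transport equation. The main obstacle is the order-$h^2$ coefficient, where three contributions have to be assembled consistently: (i) the second-order transport equation along $\gamma_E$, which yields the $-\int p_2\,dt$ and $-\tfrac{1}{2}\tfrac{d}{dE}\int p_1^{2}\,dt$ terms; (ii) the Weyl cross-terms in the commutator $[P,\chi^{a}]$ combined with the next stationary-phase coefficient at each focal point, which generate the Hessian invariant $\Delta$; and (iii) the reorganization of periodic-orbit averages into $\tfrac{d}{dE}$-derivatives via the identities furnished by the Hamiltonian flow, accompanied by the universal $\tfrac{1}{24}$ that arises in the Weyl symbol expansion. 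Combining these pieces produces exactly the stated form of $S_2(E)$, and this combinatorial-cum-algebraic identification is the genuine hard point of the proof.
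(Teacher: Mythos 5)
Your overall plan — WKB solutions along the arcs, flux normalization via the microlocal Wronskian, transfer across focal points by stationary phase, and a $2\times2$ Gram matrix whose determinant vanishing encodes the BS condition — is exactly the paper's strategy, and the way you identify the $S_0$, $S_1$, and the $-\int p_2\,dt$, $-\tfrac12\tfrac{d}{dE}\int p_1^2\,dt$ pieces of $S_2$ matches.

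There is, however, a concrete misattribution in your step (ii) that would send you down the wrong path if you executed the plan. You claim the Hessian invariant $\Delta$ comes from ``Weyl cross-terms in the commutator $[P,\chi^{a}]$ combined with the next stationary-phase coefficient.'' The commutator $[P,\chi^{a}]$ is only used to impose the flux normalization $\mathcal{W}^{a}(\hat u,\overline{\hat u})=1+\mathcal{O}(h^{2})$, and its Weyl symbol is only needed to $\mathcal{O}(h)$; it fixes the constants $C_{0},C_{1}$ but contributes nothing to the $h^{2}$ part of the action. In the paper, the $\Delta$ term instead originates in the second transport equation: its right-hand side involves $\tilde p_{2}=p_{2}+\tfrac{1}{2i}\partial_x\partial_\xi p_1-\tfrac{1}{8}\partial_x^{2}\partial_\xi^{2}p_{0}$ (the $-\tfrac18$ coming from the Weyl quantization of $P$ acting on the WKB ansatz) together with $\psi''$, $\psi'''$, $\alpha'$, $\alpha''$ carried by $\partial_\xi\lambda_0$. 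These give the function $T_{1}$ in \eqref{TINdice1}, and only after a chain of integrations by parts does the combination reorganize into $\tfrac{1}{24}\tfrac{d}{dE}\int_{\gamma_E}\Delta\,dt$ modulo exact forms. In particular the coefficient $\tfrac{1}{24}$ is \emph{not} a universal Weyl-expansion constant (there one sees only $\tfrac{1}{8}$ and $\tfrac{1}{2i}$); it is produced by the integration-by-parts bookkeeping. If you look for $\Delta$ inside the commutator and expect the $\tfrac{1}{24}$ to pop directly out of the Weyl expansion, the computation will not close.

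A minor additional point: you plan to solve a third transport equation for $a_{2}$, but this is more than is needed. Since the amplitude enters the action only through its logarithmic derivative, second-order accuracy in the phase (i.e. $\mathrm{Im}\,D_{1}$, together with the next stationary-phase correction $L_{1}$) is sufficient; the paper stops the amplitude hierarchy at $b_{1}$ and absorbs $\mathrm{Im}\,D_{1}$ into the phase as in \eqref{UCHAPEAU}.
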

Let us note that the deformation quantization method (see \cite{CdV1}), easily recovers this result, as well as higher-order terms, in particular $S_{4}(E)$ (see \cite{CARGooGR} and \cite{Gracia00Saz} for a diagrammatic approach). Recall that all odd-order terms $S_{j}(E)$ with $j\geq 3$ vanish.
\subsection{Quasi-modes mod $\mathcal{O}(h^{2})$ in Fourier representation}
We first recall H\"{o}rmander's asymptotic stationary phase theorem (see e.g. \cite{LARsHORmander}, Theorem 7.7.5):

Let $\varphi:\mathbb{R}^{d}\rightarrow \mathbb{C}$ be a function such that $\mathrm{Im}\big(\varphi(x)\big)\geq 0$, and suppose that $\varphi$ has a non-degenerate critical point at $x_{0}$. Then, we have the asymptotic expansion:
\begin{equation}\label{Phasestat0}
\int_{\mathbb{R}^{d}}e^{\frac{i}{h}\,\varphi(x)}\,u(x)\,dx \sim e^{\frac{i}{h}\,\varphi(x_{0})}\,\bigg(\det\big(\frac{\varphi''(x_{0})}{2\,i\,\pi\,h}\big)\bigg)^{-\frac{1}{2}}\,\sum_{j}h^{j}\,L_{j} u (x_{0})
\end{equation}
where $L_{j}$ are linear differential operators, with $L_{0}u(x_{0})=u(x_{0})$ and in particular:
\begin{equation}\label{A06}
L_{1}u(x_{0})=\sum_{n=0}^{2}\frac{2^{-(n+1)}}{i\,n!\,(n+1)!}\,\big<\big(\varphi''(x_{0})\big)^{-1}\,D_{x},
D_{x}\big>^{n+1}\,(\Phi_{x_{0}}^{n}\,u)(x_{0})
\end{equation}
with:
\begin{equation}\label{A07}
\Phi_{x_{0}}(x)=\varphi(x)-\varphi(x_{0})-\frac{1}{2}\,\big<\varphi''(x_{0}).\,(x-x_{0}), x-x_{0}\big>
\end{equation}
We note that $\Phi_{x_{0}}$ vanishes to order 3 at $x_{0}$ \big(i.e., $\Phi_{x_{0}}(x_{0})=0,\;\Phi'_{x_{0}}(x_{0})=0,\;\Phi''_{x_{0}}(x_{0})=0$\big).

In the sequel, we present formulas with accuracy up to the second order in $h$. It is helpful to start building the quasi-modes from a focal point, because this gives both outgoing and incoming approximate solutions at the same time.

For \(E\in I\), let \(a_E=(x_E,\xi_E)\in\gamma_E\) be such that \(\left(\displaystyle\frac{\partial p_0}{\partial \xi}\right)(a_E)=0\) (i.e., \(a_E\) is a focal point). Since \(\left(\displaystyle\frac{\partial p_0}{\partial x}\right)(a_E)\neq0\), the orbit \(\gamma_E\) can be locally parametrized near \(a_E\) using a phase function \(\psi(\xi)=\psi(\xi; E)\), which satisfies the Hamilton-Jacobi equation:
\begin{equation}\label{Jacob}
p_0\big(-\psi'(\xi),\xi\big)=E
\end{equation}
and is normalized by \(\psi(\xi_E)=0\). We then look for an asymptotic solution of $\big(P(x,hD_{x};h)-E\big)u(x;h)=0$ of the form
\begin{equation}\label{ua}
u(x;h)=(2\pi h)^{-1/2}\,\int e^{\frac{i}{h}\,x\xi}\,\hat{u}(\xi;h)\,d\xi=(2\pi h)^{-1/2}\,\int e^{\frac{i}{h}\,\big(x\xi+\psi(\xi)\big)}\,b(\xi;h)\,d\xi
\end{equation}
where $\psi,\,b$ depend also on $E$. We aim to compute:
\begin{equation}\label{Puxh}
Pu(x;h)=(2 \pi h)^{-\frac{3}{2}}\,\int\int\int e^{\frac{i}{h}\big(\,(x-y)\,\eta+y\,\xi+\psi(\xi)\,\big)}\,p\big(\frac{x+y}{2},\eta;h\big)\,b(\xi;h)\,d\xi\,dy\,d\eta
\end{equation}
In (\ref{Puxh}), we integrate with respect to the variables \((y,\eta)\). For fixed \(\xi\), the phase is:
$$(y, \eta) \mapsto (x-y)\,\eta+y\,\xi$$
with critical point \((y_c,\eta_c)=(x,\xi)\).  Let us set the change of variables:
$$y-x=2\,y',\quad\eta-\xi=\eta'$$
The Jacobian of this transformation is 2, so:
\begin{equation}\label{}
Pu(x;h)=2\,(2\,\pi\,h)^{-\frac{3}{2}}\,\int\int\int e^{\frac{i}{h}\,\big(-2\,y'\,\eta'+x\,\xi+\psi(\xi)\big)}\,p\big(x+y',\xi+\eta';h\big)\,b(\xi;h)\,d\xi\,dy'\,d\eta'
\end{equation}
Now, for fixed \(\xi\), applying the stationary phase formula to the variables \((y',\eta')\) gives:
\begin{align*}
\int\int e^{-\frac{i}{h} 2 y' \eta'} p(x+y',\xi+\eta';h)\,dy'\,d\eta'& \sim \pi h \sum_{j=0}^{N-1}\frac{1}{j! (2/h)^{j} i^{j}}\big(\partial^{j}_{y'} \partial^{j}_{\eta'} p(x+y',\xi+\eta';h)\big)_{(y',\eta')=(0,0)}\ \\
&\sim \pi h \big(p(x,\xi;h)+\frac{h}{2 i} \frac{\partial^{2} p}{\partial x \partial \xi}(x,\xi;h)-
\frac{h^{2}}{8} \frac{\partial^{4} p}{\partial x^{2} \partial \xi^{2}}(x,\xi;h)+\mathcal{O}(h^{3})\big)
\end{align*}
Hence:
\begin{align*}
Pu(x;h)&=(2 \pi h)^{-\frac{1}{2}} \int e^{\frac{i}{h} \big(x\,\xi+\psi(\xi)\big)} b(\xi;h) \big(p(x,\xi;h)+\frac{h}{2 i} \frac{\partial^{2}
p}{\partial x\,\partial \xi}(x,\xi;h)-
\frac{h^{2}}{8} \frac{\partial^{4} p}{\partial x^{2} \partial \xi^{2}}(x,\xi;h)+\mathcal{O}(h^{3})\big)\,d\xi\ \\
&=(2 \pi h)^{-\frac{1}{2}} \int e^{\frac{i}{h} \big(x\,\xi+\psi(\xi)\big)} b(\xi;h)\,\big(p_{0}(x,\xi)+h \tilde{p}_{1}(x,\xi)+h^{2}
\tilde{p}_{2}(x,\xi)+\mathcal{O}(h^{3})\big)\,d\xi
\end{align*}
where
\begin{equation}\label{}
\tilde{p}_{1}(x,\xi)=p_{1}(x,\xi)+\frac{1}{2i}\,\frac{\partial^{2}\,p_{0}}{\partial x\,\partial \xi}(x,\xi)
\end{equation}
and
\begin{equation}\label{ptilde}
\tilde{p}_{2}(x,\xi)=p_{2}(x,\xi)+\frac{1}{2i}\,\frac{\partial^{2}\,p_{1}}{\partial x\,\partial \xi}(x,\xi)
-\frac{1}{8}\,\frac{\partial^{4}\,p_{0}}{\partial x^{2}\,\partial \xi^{2}}(x,\xi)
\end{equation}
Following \big(\cite{Ycdvl}, the Maslov Ansatz and Theorem 43\big), we look for $b(\xi;h)\sim b_{0}(\xi)+h\,b_{1}(\xi)+\cdots$, a classical elliptic symbol, with $b_{0}(0)\neq 0$, such that there exists a symbol $a(x,\xi;h)\sim a_{0}(x,\xi)+h\,a_{1}(x,\xi)+\cdots$ satisfying:
\begin{equation}\label{E00}
h\,D_{\xi}\big(e^{\frac{i}{h}\,\big(x\,\xi+\psi(\xi)\big)}\,a(x,\xi;h)\big)=e^{\frac{i}{h}\,\big(x\,\xi+\psi(\xi)\big)}\,
b(\xi;h)\,
\big( p_{0}(x,\xi)-E+h\,\tilde{p}_{1}(x,\xi)+h^{2}\,\tilde{p}_{2}(x,\xi)+\mathcal{O}(h^{3})\big)
\end{equation}
or more explicitly
\begin{equation}\label{E0197As880}
\big(x+\psi'(\xi)\big)\,a(x,\xi;h)+hD_{\xi}a(x,\xi;h)=b(\xi;h)\,\big(p_{0}(x,\xi)-E+h\,\tilde{p}_{1}(x,\xi)+h^{2}\,\tilde{p}_{2}(x,\xi)+\mathcal{O}(h^{3})\big)
\end{equation}
In order to solve (\ref{E0197As880}) for $\xi$ near $\xi_{E}$, $x$ near $x_{E}$, $E$ near $E_{0}$, it is sufficient to solve the sequence of equations,
\begin{equation}\label{S045E1}
\big(x+\psi'(\xi)\big)\,a_{0}(x,\xi)=\big(p_{0}(x,\xi)-E\big)\,b_{0}(\xi)
\end{equation}
\begin{equation}\label{S045E2}
\big(x+\psi'(\xi)\big)\,a_{1}(x,\xi)+D_{\xi}\,a_{0}(x,\xi)=\big(p_{0}(x,\xi)-E\big)\,
b_{1}(\xi)+\tilde{p}_{1}(x,\xi)\,b_{0}(\xi)
\end{equation}
\begin{equation}\label{S045E3}
\big(x+\psi'(\xi)\big)\,a_{2}(x,\xi)+D_{\xi}\,a_{1}(x,\xi)=\big(p_{0}(x,\xi)-E\big)\,
b_{2}(\xi)+\tilde{p}_{1}(x,\xi)\,b_{1}(\xi)+\tilde{p}_{2}(x,\xi)\,b_{0}(\xi)
\end{equation}
Here, we have grouped the terms according to the powers of $h$ in equation (\ref{E0197As880}); equation (\ref{S045E1}) is obtained by annihilating the term in $h^0$, equation (\ref{S045E2}) by annihilating the term in $h^1$, and equation (\ref{S045E3}) by annihilating the term in $h^2$. We define the function $\lambda(x,\xi)$ by:
\begin{equation}\label{ETSER174}
\lambda(x,\xi):=\frac{p_{0}(x,\xi)-E}{x+\psi'(\xi)}
\end{equation}
From equation (\ref{ETSER174}), we deduce that:
\begin{equation}\label{ETSER175}
\lambda\big(-\psi'(\xi),\xi\big)=\big(\partial_{x}\,p_{0}\big)\big(-\psi'(\xi),\xi\big):=\alpha(\xi)
\end{equation}
Differentiating both sides of equation (\ref{Jacob}) with respect to $\xi$ gives:
\begin{equation}\label{ETSER176}
\psi''(\xi)=\frac{\big(\partial_{\xi}\,p_{0}\big)\big(-\psi'(\xi),\xi\big)}{\alpha(\xi)}
\end{equation}
which vanishes at $\xi_E$. For a given $b_0$, the unique solution of (\ref{S045E1}) is $a_{0}(x,\xi)=\lambda(x,\xi)\,b_{0}(\xi)$. In order to solve equation (\ref{S045E2}), it is necessary and sufficient that
\begin{equation}
(D_\xi a_0)\big(-\psi'(\xi),\xi\big)=\tilde{p}_{1}\big(-\psi'(\xi),\xi\big)\,b_{0}(\xi)
\end{equation}
This is equivalent to:
\begin{equation}\label{ETSER177}
\big(\partial_{\xi}\,\lambda\big)\big(-\psi'(\xi),\xi\big)\,b_{0}(\xi)+\alpha(\xi)\,b'_{0}(\xi)=\Big(i\,p_{1}\big(-\psi'(\xi),\xi\big)+
\frac{1}{2}\,\big(\frac{\partial^{2} p_{0}}{\partial x\,\partial \xi}\big)\big(-\psi'(\xi),\xi\big)\Big)\,b_{0}(\xi)
\end{equation}
A direct computation from (\ref{ETSER174}) shows that:
\begin{equation}\label{ETSER178}
\big(\partial_{x}\,\lambda\big)\big(-\psi'(\xi),\xi\big)=
\frac{1}{2}\,\big(\frac{\partial^{2} p_{0}}{\partial x^{2}}\big)\big(-\psi'(\xi),\xi\big)
\end{equation}
Differentiating both sides of equation (\ref{ETSER175}) with respect to $\xi$ gives:
\begin{equation}\label{ETSER179}
\big(\partial_{\xi}\,\lambda\big)\big(-\psi'(\xi),\xi\big)=
-\frac{\psi''(\xi)}{2}\,\big(\frac{\partial^{2} p_{0}}{\partial x^{2}}\big)\big(-\psi'(\xi),\xi\big)+
\big(\frac{\partial^{2} p_{0}}{\partial x\,\partial \xi}\big)\big(-\psi'(\xi),\xi\big)
\end{equation}
Substituting this expression into the equation (\ref{ETSER177}) gives the differential equation for $b_0$:
\begin{equation}\label{ETSER180}
\alpha(\xi)\,b'_{0}(\xi)+
\Big(\frac{1}{2}\,\alpha'(\xi)-i\,p_{1}\big(-\psi'(\xi),\xi\big)\,\Big)\,b_{0}(\xi)=0
\end{equation}
whose general solution is:
\begin{equation}\label{ETSER181}
b_{0}(\xi)=C_{0}\,|\alpha(\xi)|^{-\frac{1}{2}}\,\exp\Big(i\,\int\frac{p_{1}\big(-\psi'(\xi),\xi\big)}{\alpha(\xi)}\,d\xi\Big)
\end{equation}
In order to solve (\ref{S045E3}) it is necessary and sufficient that
\begin{equation}\label{ETSER182}
\big(D_{\xi}\,a_{1}\big)\big(-\psi'(\xi),\xi\big)=
\tilde{p}_{1}\big(-\psi'(\xi),\xi\big)\,b_{1}(\xi)+\tilde{p}_{2}\big(-\psi'(\xi),\xi\big)\,b_{0}(\xi)
\end{equation}
From equation (\ref{S045E2}), we get:
\begin{equation}\label{ETSER183}
a_{1}(x,\xi)=\lambda(x,\xi)\,b_{1}(\xi)+\lambda_{0}(x,\xi)
\end{equation}
where
\begin{equation}\label{ETSER184}
\lambda_{0}(x,\xi)=\frac{\tilde{p}_{1}(x,\xi)\,b_{0}(\xi)+i\,\partial_{\xi}\,a_{0}(x,\xi)}{x+\psi'(\xi)}
\end{equation}
Before continuing, we state a lemma we will use.
\begin{lem}\label{L1}
\begin{equation}\label{ETSER185}
\lambda_{0}\big(-\psi'(\xi),\xi\big)=
b_{0}(\xi)\,\big(\partial_{x} p_{1}-\frac{p_{1}}{2\,\alpha}\,\frac{\partial^{2} p_{0}}{\partial x^{2}}\big)_{x=-\psi'(\xi)}-
i\,b_{0}(\xi)\,\big(\frac{\psi''(\xi)}{6}\frac{\partial^{3} p_{0}}{\partial x^{3}}
+\frac{\alpha'}{4\,\alpha}\,\frac{\partial^{2} p_{0}}{\partial x^{2}}\big)_{x=-\psi'(\xi)}
\end{equation}
\begin{equation}\label{ETSER186}
\!\!\!\!\!\!\!\!\!\!\!\big(\frac{\partial \lambda_{0}}{\partial x}\big)\big(-\psi'(\xi),\xi\big)=
\frac{b_{0}(\xi)}{2}\,\big(\frac{\partial^{2} p_{1}}{\partial x^{2}}-
\frac{p_{1}}{3\,\alpha}\,\frac{\partial^{3} p_{0}}{\partial x^{3}}\big)_{x=-\psi'(\xi)}-
i\,\frac{b_{0}(\xi)}{12}\,\big(\frac{\partial^{4} p_{0}}{\partial x^{3}\,\partial \xi}+
\frac{\psi''(\xi)}{2}\frac{\partial^{4} p_{0}}{\partial x^{4}}+
\frac{\alpha'}{\alpha}\,\frac{\partial^{3} p_{0}}{\partial x^{3}}\big)_{x=-\psi'(\xi)}
\end{equation}
\begin{equation}\label{ETSER187}
\big(\frac{\partial^{n}\,\lambda}{\partial\,x^{n}}\big)\big(-\psi'(\xi),\xi\big)=
\frac{1}{n+1}\,(\frac{\partial^{n+1}\,p_{0}}{\partial\,x^{n+1}})(-\psi'(\xi),\xi);\; \forall \,n\in\mathbb{N}
\end{equation}
\end{lem}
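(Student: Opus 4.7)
The plan is to Taylor-expand $p_0$ in $x$ around the focal abscissa $x=-\psi'(\xi)$ and apply L'H\^{o}pital-type arguments to the definition of $\lambda_0$, at the end eliminating $b_0'$ by the transport equation (\ref{ETSER180}). For (\ref{ETSER187}), the Hamilton-Jacobi equation (\ref{Jacob}) gives
\begin{equation*}
p_0(x,\xi)-E \;=\; \sum_{k\geq 1}\frac{(x+\psi'(\xi))^k}{k!}\,(\partial_x^k p_0)\bigl(-\psi'(\xi),\xi\bigr),
\end{equation*}
so dividing by $x+\psi'(\xi)$ yields the explicit expansion
\begin{equation*}
\lambda(x,\xi) \;=\; \sum_{k\geq 0}\frac{(x+\psi'(\xi))^k}{(k+1)!}\,(\partial_x^{k+1} p_0)\bigl(-\psi'(\xi),\xi\bigr).
\end{equation*}
Applying $\partial_x^n$ and setting $x=-\psi'(\xi)$ leaves only the $k=n$ term, which proves (\ref{ETSER187}).

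For (\ref{ETSER185}) I first observe that (\ref{ETSER177}), the solvability condition for (\ref{S045E2}), is precisely the statement that the numerator $\tilde{p}_1 b_0 + i\partial_\xi a_0$ of $\lambda_0$ vanishes at $x=-\psi'(\xi)$. Hence by L'H\^{o}pital
\begin{equation*}
\lambda_0\bigl(-\psi'(\xi),\xi\bigr) \;=\; \partial_x\bigl[\tilde{p}_1 b_0 + i\,\partial_\xi a_0\bigr]_{x=-\psi'(\xi)}.
\end{equation*}
Writing $a_0=\lambda b_0$ and $\partial_x\partial_\xi a_0 = (\partial_x\partial_\xi\lambda) b_0 + (\partial_x\lambda)b_0'$, I use (\ref{ETSER178}) together with one $\xi$-differentiation of the series of the first step (via the chain rule both on $\psi'(\xi)$ and on the coefficient $(\partial_x^{k+1}p_0)(-\psi'(\xi),\xi)$) to obtain $(\partial_x\partial_\xi\lambda)(-\psi'(\xi),\xi) = -\tfrac{\psi''}{6}\,\partial_x^3 p_0 + \tfrac{1}{2}\,\partial_x^2\partial_\xi p_0$, where only the $k=1$ and $k=2$ terms of the differentiated series contribute. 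The imaginary piece $\tfrac{1}{2i}\partial_x^2\partial_\xi p_0$ of $\partial_x\tilde{p}_1$ then cancels against the corresponding piece of $i\partial_x\partial_\xi\lambda$, and the transport equation (\ref{ETSER180}), rewritten as $b_0' = \bigl(-\tfrac{\alpha'}{2\alpha}+\tfrac{ip_1}{\alpha}\bigr)b_0$, eliminates $b_0'$, producing (\ref{ETSER185}).

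For (\ref{ETSER186}) I multiply the definition of $\lambda_0$ by $x+\psi'(\xi)$ and apply $\partial_x^2$, which gives
\begin{equation*}
2\,\partial_x\lambda_0 + (x+\psi'(\xi))\,\partial_x^2\lambda_0 \;=\; \partial_x^2\tilde{p}_1\cdot b_0 + i\,\partial_x^2\partial_\xi a_0;
\end{equation*}
at $x=-\psi'(\xi)$ this solves for $\partial_x\lambda_0$ in closed form. Using $\partial_x^2 a_0=(\partial_x^2\lambda)b_0$, I need $(\partial_x^2\lambda)(-\psi'(\xi),\xi)=\tfrac13\partial_x^3 p_0$ from (\ref{ETSER187}) and, by the same series computation as in the previous step, $(\partial_\xi\partial_x^2\lambda)(-\psi'(\xi),\xi) = -\tfrac{\psi''}{12}\,\partial_x^4 p_0 + \tfrac{1}{3}\,\partial_x^3\partial_\xi p_0$. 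Inserting these, performing the analogous imaginary cancellation against the $\tfrac{1}{2i}\partial_x^3\partial_\xi p_0$ piece of $\partial_x^2\tilde{p}_1$, and once more trading $b_0'$ for $b_0$ by (\ref{ETSER180}), yields (\ref{ETSER186}). The delicate point throughout is the bookkeeping of these mixed derivatives: differentiating in $\xi$ activates $\psi''$ both through the powers $(x+\psi'(\xi))^k$ and through the chain-rule identity $\tfrac{d}{d\xi}(\partial_x^{k+1}p_0)(-\psi'(\xi),\xi) = -\psi''\,\partial_x^{k+2}p_0 + \partial_x^{k+1}\partial_\xi p_0$, and the two contributions must combine correctly before the imaginary cancellations become visible; apart from that, the verification is a guided algebraic simplification.
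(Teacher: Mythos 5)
Your proof is correct, and since the paper states Lemma~\ref{L1} without giving any argument (``Before continuing, we state a lemma we will use''), there is nothing in the paper to compare against: your reconstruction is essentially the only natural one and it is sound. The three ingredients you identify are exactly what is needed: (i) the Taylor expansion of $p_0(\cdot,\xi)$ about $x=-\psi'(\xi)$, using the eikonal equation to drop the $k=0$ term, which yields the convergent/formal series for $\lambda$ and hence (\ref{ETSER187}); (ii) the observation that the unnumbered solvability condition for (\ref{S045E2}) — equivalently (\ref{ETSER177}) — says the numerator $\tilde p_1 b_0 + i\,\partial_\xi a_0$ of $\lambda_0$ vanishes on $\{x=-\psi'(\xi)\}$, so that L'H\^opital applies and $\lambda_0(-\psi'(\xi),\xi)$ equals the $x$-derivative of the numerator there; and (iii) the chain-rule bookkeeping for $\partial_\xi\partial_x^n\lambda$ restricted to the Lagrangian, combined with (\ref{ETSER180}) to eliminate $b_0'$. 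I verified the two mixed-derivative identities $(\partial_x\partial_\xi\lambda)(-\psi'(\xi),\xi)=-\frac{\psi''}{6}\partial_x^3 p_0+\frac12\partial_x^2\partial_\xi p_0$ and $(\partial_x^2\partial_\xi\lambda)(-\psi'(\xi),\xi)=-\frac{\psi''}{12}\partial_x^4 p_0+\frac13\partial_x^3\partial_\xi p_0$; both are correct.

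One small imprecision in your narration: for (\ref{ETSER186}) the ``imaginary cancellation'' is only partial. From $\partial_x^2\tilde p_1$ you get $-\tfrac{i}{2}\partial_x^3\partial_\xi p_0$ and from $i\,\partial_x^2\partial_\xi\lambda$ you get $+\tfrac{i}{3}\partial_x^3\partial_\xi p_0$, leaving a residue $-\tfrac{i}{6}\partial_x^3\partial_\xi p_0$; after dividing by $2$ this is precisely the $-\frac{i b_0}{12}\,\partial_x^3\partial_\xi p_0$ term that appears in (\ref{ETSER186}). Saying ``analogous cancellation'' as for (\ref{ETSER185}), where the cancellation is indeed exact, could mislead a reader into expecting that term to vanish. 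A cleaner phrasing would be that the two contributions combine to the coefficient $-\tfrac{1}{6}$ rather than $0$. Apart from that, the proof is complete and the bookkeeping is accurate.
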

Differentiating both sides of equation (\ref{ETSER183}) with respect to $\xi$ and evaluating at $x=-\psi'(\xi)$ gives
\begin{equation}\label{ETSER188}
\big(D_{\xi}\,a_{1}\big)\big(-\psi'(\xi),\xi\big)=
\frac{1}{i}\,\big(\partial_{\xi}\,\lambda\big)\big(-\psi'(\xi),\xi\big)\,b_{1}(\xi)+\frac{1}{i}\,\alpha(\xi)\,b'_{1}(\xi)+
\frac{1}{i}\,\big(\partial_{\xi}\,\lambda_{0}\big)\big(-\psi'(\xi),\xi\big)
\end{equation}
Then, comparing this with equation (\ref{ETSER182}), we see that $b_1$ must satisfy the differential equation
\begin{equation}\label{ETSER189}
\alpha(\xi)\,b'_{1}(\xi)+\Big(\frac{1}{2}\,\alpha'(\xi)-i\,p_{1}\big(-\psi'(\xi),\xi\big)\Big)\,b_{1}(\xi)=
i\,\tilde{p}_{2}\big(-\psi'(\xi),\xi\big)\,b_{0}(\xi)-
\big(\partial_{\xi}\,\lambda_{0}\big)\big(-\psi'(\xi),\xi\big)
\end{equation}
The homogeneous part of this equation is the same as in (\ref{ETSER180}); therefore, we seek a particular solution of the form
\begin{equation}\label{ETSER190}
D_{1}(\xi)\,|\alpha(\xi)|^{-\frac{1}{2}}\,\exp\big(i\,\int_{\xi_{E}}^{\xi}\frac{p_{1}\big(-\psi'(\zeta),\zeta\big)}{\alpha(\zeta)}\,d\zeta\big)
\end{equation}
Using variation of constants, we find
\begin{equation}\label{ETSER191}
\!\!\!\!\!\!\!\!\!\!\!\!\!\!\!\!\!\!\!\!D_{1}(\xi)=\mathrm{sgn}(\alpha(\xi_{E}))\int_{\xi_{E}}^{\xi}|\alpha(\zeta)|^{-\frac{1}{2}}
\Big(ib_{0}(\zeta)\tilde{p}_{2}\big(-\psi'(\zeta),\zeta\big)-\big(\partial_{\zeta}\,\lambda_{0}\big)\big(-\psi'(\zeta),\zeta\big)\Big)\exp\Big(-i \int_{\xi_{E}}^{\zeta}\frac{p_{1}\big(-\psi'(s),s\big)}{\alpha(s)}\,ds\Big)\,d\zeta
\end{equation}
We normalize by setting
\begin{equation}\label{ETSER192}
D_{1}(\xi_E)=0
\end{equation}
So the general solution of the equation is:
\begin{equation}\label{ETSER193}
b_{1}(\xi)=\big(C_{1}+D_{1}(\xi)\big)
\,|\alpha(\xi)|^{-\frac{1}{2}}\,\exp\big(i\,\int_{\xi_{E}}^{\xi}\frac{p_{1}\big(-\psi'(\zeta),\zeta\big)}{\alpha(\zeta)}\,d\zeta\big)
\end{equation}
It follows that:
\begin{equation}\label{ETSER194}
b_{0}(\xi)+h\,b_{1}(\xi)=\big(C_{0}+h\,C_{1}+h\,D_{1}(\xi)\big)
\,|\alpha(\xi)|^{-\frac{1}{2}}\,\exp\big(i\,\int_{\xi_{E}}^{\xi}\frac{p_{1}(-\psi'(\zeta),\zeta)}{\alpha(\zeta)}\,d\zeta\big)
\end{equation}
The integration constants $C_0$ and $C_1=C_1(a_E)$ will be determined by normalizing the microlocal Wronskians as follows
\subsection{Normalisation}\label{NORMALIsation}
We compute the microlocal Wronskian of $(u^{a},\overline{u^{a}})=(u,\overline{u})$ in $\omega^{a}_{\rho}$. Our goal is to normalize the microlocal solution
\[\hat{u}(\xi;h)=e^{\frac{i}{h} \psi(\xi)}\,b(\xi;h)\]
using the microlocal Wronskian. That is, we seek constants \(C_0\) and \(C_1=C_1(a_E)\) such that
\[\mathcal{W}^{a}(\hat{u},\overline{\hat{u}})=1+\mathcal{O}(h^{2}).\]
In the Fourier representation, we write:
$$\mathcal{W}^{a}_{\rho}(\hat{u},\overline{\hat{u}})=\displaystyle \big(\frac{i}{h}\,[P,\chi^{a}]_{\rho}\,\hat{u}|\hat{u}\big)$$
where $\chi^{a}\in C^{\infty}_{0}(\mathbb{R}^{2})$ is a smooth cut-off equal to 1 near the focal point $a=a_E$. Without loss of generality, we can take $\chi^{a}(x,\xi)=\chi_{1}(x)\,\chi_{2}(\xi)$, with $\chi_{2}\equiv 1$ on small neighborhoods $\omega^{a}_{\pm}$, of $\mathrm{supp}\big(\frac{i}{h}\,[P,\chi^{a}]\big)\cap \{p_{0}(x,\xi)=E\}$ in the region $\pm (\xi-\xi_{E})>0$. Therefore, it is sufficient to consider variations of the function \(\chi_1(x)\) only.

In general, if \(P\) and \(Q\) are two \(h\)-pseudodifferential operators, whose Weyl symbols admit the expansions
\begin{equation*}
\sigma^{W}(P)(x,\xi;h)=p_{0}(x,\xi)+h\,p_{1}(x,\xi)+h^{2}\,p_{2}(x,\xi)+\cdots
\end{equation*}
\begin{equation*}
\sigma^{W}(Q)(x,\xi;h)=q_{0}(x,\xi)+h\,q_{1}(x,\xi)+h^{2}\,q_{2}(x,\xi)+\cdots
\end{equation*}
then we have
\begin{equation*}
\sigma^{W}\big(\frac{i}{h}[P,Q]\big)=\{\sigma^{W}(P),\sigma^{W}(Q)\}+\mathcal{O}(h^{2})
\end{equation*}
where \(\{\cdot,\cdot\}\) denotes the Poisson bracket. In particular,
\begin{equation*}
\sigma^{W}\big(\frac{i}{h}\,[P,\chi^{a}]\big)(x,\xi;h):=c(x,\xi;h)=
\big(\partial_{\xi}\,p_{0}(x,\xi)+h\,\partial_{\xi}\,p_{1}(x,\xi)\big)\,\chi'_{1}(x)+\mathcal{O}(h^{2})
\end{equation*}
Using the Weyl calculus, the operator acts in Fourier representation as
\begin{equation*}
c^{w}(-h\,D_{\xi},\xi;h)\,v(\xi;h)=(2\,\pi\,h)^{-1}\,\int\int e^{-\frac{i}{h}\,(\xi-\eta)\,y}\,c\big(y,\frac{\xi+\eta}{2};h\big)\,v(\eta;h)\,dy\,d\eta
\end{equation*}
Hence, applying to $\hat{u}$, we get
\begin{equation*}
\displaystyle \frac{i}{h}\,[P,\chi^{a}]\,\hat{u}(\xi;h)=(2\,\pi\,h)^{-1}\,\int\int e^{\frac{i}{h}\big(\psi(\eta)-(\xi-\eta)\,y\big)}\,c\big(y,\frac{\xi+\eta}{2};h\big)\,b(\eta;h)\,dy\,d\eta
\end{equation*}
For fixed $\xi$, the phase function corresponding to the oscillatory integral defining $\displaystyle \frac{i}{h}\,[P,\chi^{a}]\,\hat{u}$ is given by
\begin{equation*}
\varphi_{\xi}(y,\eta)=\psi(\eta)-(\xi-\eta)\,y
\end{equation*}
The critical points of $\varphi_{\xi}$ are
\begin{equation*}
\big(y_{c}(\xi),\eta_{c}(\xi)\big)=\big(-\psi'(\xi),\xi\big),
\end{equation*}
and therefore, the corresponding critical values of $\varphi_{\xi}$ are
\begin{equation*}
\varphi_{\xi}\big(y_{c}(\xi),\eta_{c}(\xi)\big)=\varphi_{\xi}\big(-\psi'(\xi),\xi\big)=\psi(\xi)
\end{equation*}
A direct computation shows that
\begin{equation*}
\big(\mathrm{Hess}\,\varphi_{\xi}\big)\big(y_{c}(\xi),\eta_{c}(\xi)\big)=
\left(
\begin{array}{cc}
0 & 1 \\
1 & \psi''(\xi) \\
\end{array}
\right)
\end{equation*}
Let $c_{j}(y,\eta):=\partial_{\eta} p_{j}(y,\eta)\,\chi'_{1}(y)$, for all $j\in\{0,1\}$. By the stationary phase theorem (\ref{Phasestat0}), we obtain:
\begin{equation*}
\frac{i}{h}\,[P,\chi^{a}]\,\hat{u}(\xi;h)=e^{\frac{i}{h}\,\psi(\xi)}\,\big(d_{0}(\xi)+h\,d_{1}(\xi)+\mathcal{O}(h^{2})\big)
\end{equation*}
with
\begin{equation*}
d_{0}(\xi)=c_{0}\big(-\psi'(\xi),\xi\big)\,b_{0}(\xi)
\end{equation*}
and
\begin{equation*}
d_{1}(\xi)=c_{0}\big(-\psi'(\xi),\xi\big)\,b_{1}(\xi)+
c_{1}\big(-\psi'(\xi),\xi\big)\,b_{0}(\xi)+\frac{i}{2}\,J(\xi)
\end{equation*}
where
\begin{equation*}
J(\xi)=e_{0}'(\xi)\,b_{0}(\xi)+2\,e_{0}(\xi)\,b'_{0}(\xi)
\end{equation*}
and where we have set
\begin{equation*}
e_{0}(\xi)=\partial_{x} c_{0}\big(-\psi'(\xi),\xi\big)
\end{equation*}
It follows that
\begin{align*}
\mathcal{W}^{a}_{+}(\hat{u},\overline{\hat{u}})&=
\int_{\xi_{E}}^{+\infty}d_{0}(\xi)\,\overline{b_{0}(\xi)}\,d\xi+h\,
\int_{\xi_{E}}^{+\infty}\big(d_{0}(\xi)\,\overline{b_{1}(\xi)}+
d_{1}(\xi)\,\overline{b_{0}(\xi)}\big)\,d\xi+\mathcal{O}(h^{2})\ \\
&=\mathcal{M}_{0}^{+}+h\,\mathcal{M}_{1}^{+}+\mathcal{O}(h^{2})
\end{align*}
First, we have
\begin{align*}
\mathcal{M}_{0}^{+}&=
\int_{\xi_{E}}^{+\infty}\partial_{\xi} p_{0}\big(-\psi'(\xi),\xi\big)
\,\chi'_{1}\big(-\psi'(\xi)\big)\,|b_{0}(\xi)|^{2}\,d\xi\ \\
&=
|C_{0}|^{2}\,\int_{\xi_{E}}^{+\infty}\frac{\alpha(\xi)}{|\alpha(\xi)|}\,
\psi''(\xi)\,\,\chi'_{1}\big(-\psi'(\xi)\big)\,d\xi\ \\
&=-|C_{0}|^{2}\,\int_{\xi_{E}}^{+\infty}\mathrm{sgn}(\alpha(\xi))\,
\frac{d}{d\xi}\big(\chi_{1}\big(-\psi'(\xi)\big)\big)\,d\xi\ \\
&=|C_{0}|^{2}\,\mathrm{sgn}\big(\alpha(\xi_{E})\big)
\end{align*}
The next step is to compute
\begin{equation}\label{devient}
\mathcal{M}_{1}^{+}:=\int_{\xi_{E}}^{+\infty}\big(d_{0}(\xi)\,\overline{b_{1}(\xi)}+
d_{1}(\xi)\,\overline{b_{0}(\xi)}\big)\,d\xi
\end{equation}
A few lines of calculations show that
\begin{align*}
d_{0}\,\overline{b_{1}}+d_{1}\,\overline{b_{0}}&=
-2\,\mathrm{Re}(\overline{C_{0}}\,C_{1})\,\frac{d}{d\xi}(\chi_{1})\,\mathrm{sgn}(\alpha)
-2\,\mathrm{Re}(\overline{C_{0}}\,D_{1})\,\frac{d}{d\xi}(\chi_{1})\,\mathrm{sgn}(\alpha)\ \\
&+\frac{|C_{0}|^{2}}{|\alpha|}\,\big(\partial_{\xi} p_{1}\,\chi'_{1}-s_{0}\,p_{1}\big)
+\frac{i}{2}\,|C_{0}|^{2}\,\mathrm{sgn}(\alpha)\,s'_{0}
\end{align*}
where we set
$$s_{0}(\xi)=\frac{e_{0}(\xi)}{\alpha(\xi)}$$
and therefore
\begin{align*}
\mathcal{M}_{1}^{+}&=-2\,\mathrm{Re}(\overline{C_{0}}\,C_{1})\,\int_{\xi_{E}}^{+\infty}\,
\mathrm{sgn}(\alpha)\,\frac{d}{d\xi}(\chi_{1})\,d\xi
-2\,\mathrm{Re}\,\big(\overline{C_{0}}\,\int_{\xi_{E}}^{+\infty}
\mathrm{sgn}(\alpha)\,D_{1}(\xi)\,\frac{d}{d\xi}(\chi_{1})\,\,d\xi\big) \ \\ &+|C_{0}|^{2}\,\int_{\xi_{E}}^{+\infty}\frac{1}{|\alpha|}\,\big(\partial_{\xi} p_{1}\,\chi'_{1}-s_{0}\,p_{1}\big)\,d\xi+
\frac{i}{2}\,|C_{0}|^{2}\,
\int_{\xi_{E}}^{+\infty}\mathrm{sgn}(\alpha)\,s'_{0}\,d\xi
\end{align*}
Note that
\begin{equation*}
\partial_{x} c_{0}(x,\xi)=
\frac{\partial^{2} p_{0}(x,\xi)}{\partial x\,\partial \xi}\,\chi'_{1}(x)+
\partial_{\xi} p_{0}(x,\xi)\,\chi''_{1}(x)
\end{equation*}
and
\begin{equation*}
\chi'_{1}(x_{E})=0,\;\chi''_{1}(x_{E})=0,\;\displaystyle \lim_{\xi\rightarrow +\infty}\chi'_{1}\big(-\psi'(\xi)\big)=0,\;\displaystyle \lim_{\xi\rightarrow +\infty}\chi''_{1}\big(-\psi'(\xi)\big)=0,\;\alpha(\xi_{E})\neq 0
\end{equation*}
This implies that
\begin{align*}
\int_{\xi_{E}}^{+\infty}\mathrm{sgn}\big(\alpha(\xi)\big)\,s'_{0}(\xi)\,d\xi&=
\int_{\xi_{E}}^{+\infty}\mathrm{sgn}\big(\alpha(\xi)\big)\,
\frac{d}{d\xi}\big(\frac{\tilde{s}_{0}(\xi)}{\alpha(\xi)}\big)\,
d\xi\ \\
&=\int_{\xi_{E}}^{+\infty}\frac{d}{d\xi}\big(\frac{\tilde{s}_{0}(\xi)}{|\alpha(\xi)|}\big)\,d\xi\ \\
&=\big[\frac{\tilde{s}_{0}(\xi)}{|\alpha(\xi)|}\big]_{\xi_{E}}^{+\infty}=
\big[\frac{\partial_{x} c_{0}\big(-\psi'(\xi),\xi\big)}{|\alpha(\xi)|}\,\big]_{\xi_{E}}^{+\infty}=0
\end{align*}
and
\begin{equation*}
\!\!\!\int_{\xi_{E}}^{+\infty}\,
\mathrm{sgn}\big(\alpha(\xi)\big)\,\frac{d}{d\xi}\bigg(\chi_{1}\big(-\psi'(\xi)\big)\bigg)\,d\xi=
\big[\mathrm{sgn}\big(\alpha(\xi)\big)\,\chi_{1}\big(-\psi'(\xi)\big)\big]_{\xi_{E}}^{+\infty}
=-\mathrm{sgn}\big(\alpha(\xi_{E})\big)\,\chi_{1}(x_{E})=-\mathrm{sgn}\big(\alpha(\xi_{E})\big)
\end{equation*}
So equation (\ref{devient}) becomes:
\begin{equation}\label{mathcalS1}
\mathcal{M}_{1}^{+}
=2\,\mathrm{sgn}\big(\alpha(\xi_{E})\big)\,\mathrm{Re}(\overline{C_{0}}\,C_{1})-2\,I_{1}+I_{2}
\end{equation}
where we set
\begin{equation}\label{i1}
I_{1}=\mathrm{Re}\Big(\overline{C_{0}}\,
\int_{\xi_{E}}^{+\infty}\mathrm{sgn}\big(\alpha(\xi)\big)\,D_{1}(\xi)\,
\frac{d}{d\xi}\big(\chi_{1}\big(-\psi'(\xi)\big)\big)\,\,d\xi\Big)
\end{equation}
and
\begin{equation}\label{i2}
I_{2}=|C_{0}|^{2}\,\int_{\xi_{E}}^{+\infty}\frac{1}{|\alpha(\xi)|}\,
\Big(\partial_{\xi} p_{1}\big(-\psi'(\xi),\xi\big)\,
\chi'_{1}\big(-\psi'(\xi)\big)-s_{0}(\xi)\,p_{1}\big(-\psi'(\xi),\xi\big)\Big)\,d\xi
\end{equation}
After a few integrations by parts, we obtain that
\begin{equation}\label{}
I_{1}=-\frac{|C_{0}|^{2}}{2}\,\mathrm{sgn}\big(\alpha(\xi_{E})\big)\,
\partial_{x}\big(\displaystyle \frac{p_{1}}{\partial_{x} p_{0}}\big)(a_{E})+\frac{|C_{0}|^{2}}{2}\,
\int_{\xi_{E}}^{+\infty}\frac{\psi''\,\chi'_{1}}{|\alpha|}\,\big(\partial_{x} p_{1}-\frac{p_{1}}{\alpha}\,
\frac{\partial^{2} p_{0}}{\partial x^{2}}\big)\,d\xi
\end{equation}
and
\begin{equation}\label{}
I_{2}=|C_{0}|^{2}\,
\int_{\xi_{E}}^{+\infty}\frac{\psi''\,\chi'_{1}}{|\alpha|}\,\big(\partial_{x} p_{1}-\frac{p_{1}}{\alpha}\,
\frac{\partial^{2} p_{0}}{\partial x^{2}}\big)\,d\xi
\end{equation}
Finally
\begin{equation*}
\mathcal{M}_{1}^{+}=2\,\mathrm{sgn}\big(\alpha(\xi_{E})\big)\,\mathrm{Re}(\overline{C_{0}}\,C_{1})+|C_{0}|^{2}\,\mathrm{sgn}\big(\alpha(\xi_{E})\big)\,
\partial_{x}\big(\displaystyle \frac{p_{1}}{\partial_{x} p_{0}}\big)(a_{E})
\end{equation*}
We thus have modulo $\mathcal{O}(h^{2})$
\begin{equation}\label{}
\mathcal{W}^{a}_{+}(\hat{u},\overline{\hat{u}})=|C_{0}|^{2}\,\mathrm{sgn}\big(\alpha(\xi_{E})\big)+
h\,\mathrm{sgn}\big(\alpha(\xi_{E})\big)\,\Big(2\,\mathrm{Re}(\overline{C_{0}}\,C_{1})+|C_{0}|^{2}\,
\partial_{x}\big(\displaystyle \frac{p_{1}}{\partial_{x} p_{0}}\big)(a_{E})\Big)
\end{equation}
Similarly, one shows that modulo $\mathcal{O}(h^{2})$
\begin{equation}\label{}
\mathcal{W}^{a}_{-}(\hat{u},\overline{\hat{u}})=-|C_{0}|^{2}\,
\mathrm{sgn}\big(\alpha(\xi_{E})\big)-
h\,\mathrm{sgn}\big(\alpha(\xi_{E})\big)\,\Big(2\,\mathrm{Re}(\overline{C_{0}}\,C_{1})+|C_{0}|^{2}\,
\partial_{x}\big(\displaystyle \frac{p_{1}}{\partial_{x} p_{0}}\big)(a_{E})\Big)
\end{equation}
which allows us to conclude that modulo $\mathcal{O}(h^{2})$
\begin{equation}\label{}
\mathcal{W}^{a}(\hat{u},\overline{\hat{u}}):=
\mathcal{W}^{a}_{+}(\hat{u},\overline{\hat{u}})-\mathcal{W}^{a}_{-}(\hat{u},\overline{\hat{u}})=2\,|C_{0}|^{2}\,\mathrm{sgn}\big(\alpha(\xi_{E})\big)+2\,
h\,\mathrm{sgn}\big(\alpha(\xi_{E})\big)\,\Big(2\,\mathrm{Re}(\overline{C_{0}}\,C_{1})+|C_{0}|^{2}\,
\partial_{x}\big(\displaystyle \frac{p_{1}}{\partial_{x} p_{0}}\big)(a_{E})\Big)
\end{equation}
Assuming $\alpha(\xi_{E})>0$, $C_{0}>0$, and $C_{1}\in \mathbb{R}$, it follows that
\begin{equation}\label{C00}
C_{0}=2^{-1/2}
\end{equation}
and
\begin{equation}\label{C01}
C_{1}:=C_{1}(a_{E})=-2^{-3/2}\,\partial_{x}\big(\frac{p_{1}}{\partial_{x}\,p_{0}}\big)(a_{E})
\end{equation}
For the values of $C_{0}$ and $C_{1}$ found above, we indeed have
\begin{equation}\label{}
\mathcal{W}^{a}(\hat{u},\overline{\hat{u}})=1+\mathcal{O}(h^{2})
\end{equation}
We say that $u^{a}$ is well-normalized mod $\mathcal{O}(h^{2})$. This can be formalized by considering $\{a_{E}\}$ as a Poincaré section (see Section \ref{ACTIO0NANGLE}), and \emph{Poisson operator} the operator that assigns, in a unique way, to the initial condition $C_{0}$ on $\{a_{E}\}$ the well normalized (forward) solution $u^{a}$ to $(P-E)u^{a}=0$: namely, $C_{1}(E)$ and $D_{1}(\xi)$, hence also $\widehat{u^{a}}$, depend linearly on $C_{0}$.
\begin{rem}\label{}
So far, under the assumption $\alpha(\xi_{E})>0$, we have obtained the following expression
\begin{equation}\label{}
\widehat{u^{a}}(\xi;h)=\big(C_{0}+h\,C_{1}(a_{E})+h\,D_{1}(\xi)+\mathcal{O}(h^{2})\big)\,|\alpha(\xi)|^{-\frac{1}{2}}\,
\exp\Big[\frac{i}{h}\,\big(\psi(\xi)+h\,\int_{\xi_{E}}^{\xi}\frac{p_{1}(-\psi'(\zeta),\zeta)}{\alpha(\zeta)}\,d\zeta\,\big)\Big]
\end{equation}
Thanks to the identity
\begin{equation}\label{}
C_{0}+h\,C_{1}(a_{E})+h\,D_{1}(\xi)=
\Big(C_{0}+h\,C_{1}+h\,\mathrm{Re}\big(D_{1}(\xi)\big)\Big)\,
\exp\Big[\frac{i\,h}{C_{0}}\,\mathrm{Im}\big(D_{1}(\xi)\big)\Big]+\mathcal{O}(h^{2})
\end{equation}
we can refine both the phase and the half-density up to the next order as follows
\begin{equation}\label{UCHAPEAU}
\widehat{u^{a}}(\xi;h)=\Big(C_{0}+h\,C_{1}(a_{E})+h\,\mathrm{Re}\big(D_{1}(\xi)\big)\Big)\,|\alpha(\xi)|^{-\frac{1}{2}}\,
\exp\big[\frac{i}{h}\,\tilde{S}(\xi,\xi_{E};h)\big]\,\big(1+\mathcal{O}(h^{2})\big)
\end{equation}
where the improved phase \(\tilde{S}\) is defined by
\begin{equation}\label{}
\tilde{S}(\xi,\xi_{E};h)=\psi(\xi)+h\,\int_{\xi_{E}}^{\xi}\frac{p_{1}\big(-\psi'(\zeta),\zeta\big)}{\alpha(\zeta)}\,d\zeta+
\frac{h^{2}}{C_{0}}\mathrm{Im}\big(D_{1}(\xi)\big)
\end{equation}
\end{rem}
\subsection{The homology class of the generalized action: Fourier representation}
Here we identify the various term in (\ref{UCHAPEAU}). First on a $\gamma_{E}$ (i.e. $\Lambda_{E}$) we have
\begin{equation*}
\psi(\xi)=\int-x(\xi)\,d\xi+\mathrm{Const}
\end{equation*}
and
\begin{equation*}
\varphi(x)=\int \xi(x)\,dx+\mathrm{Const}
\end{equation*}
By Hamilton equations
$$\left\{
\begin{array}{ll}
\dot{\xi}(t)&=-\partial_{x} p_{0}\big(x(t),\xi(t)\big)\\
\dot{x}(t)&=\partial_{\xi} p_{0}\big(x(t),\xi(t)\big)
\end{array}
\right.$$
so
\begin{equation*}
\int \frac{p_{1}\big(-\psi'(\xi),\xi\big)}{\alpha(\xi)}\,d\xi=-\int_{\gamma_{E}} p_{1}\big(x(t),\xi(t)\big)\,dt
\end{equation*}
where $t$ is the parametrization of $\gamma_{E}$ by the time evolution. The form $p_{1}\,dt$ is called sub-principal 1-form. Next we consider $D_{1}(\xi)$ as the integral over $\gamma_{E}$ of the 1-form $\Omega_{1}(\xi)$, defined near $\xi_{E}$ in Fourier representation by the expression (\ref{ETSER191}). Using WKB construction, $\Omega_{1}(\xi)$ can also be extended to the spatial representation. Since $\gamma_{E}$ is Lagrangian, $\Omega_{1}(\xi)$ is clearly closed on $\gamma_{E}$; our goal is to compute it modulo exact forms. Using integration by parts in (\ref{ETSER191}), together with the condition $D_{1}(\xi_{E})=0$, we derive the following relations
\begin{align}
\sqrt{2}\,\mathrm{Re}\big(D_{1}(\xi)\big)&=\frac{1}{2}\bigg[\frac{p_{1}\big(-\psi'(\zeta),\zeta\big)\,\big(\displaystyle\frac{\partial^{2} p_{0}}{\partial x^{2}}\big)\big(-\psi'(\zeta),\zeta\big)
-\alpha(\zeta)\,\big(\partial_{x} p_{1}\big)\big(-\psi'(\zeta),\zeta\big)}{\alpha^{2}(\zeta)}\bigg]^{\xi}_{\xi_{E}}\nonumber\ \\
&=-\frac{1}{2}\,\bigg[\partial_{x}\big(\frac{p_{1}}{\partial_{x} p_{0}}\big)\big(-\psi'(\zeta),\zeta\big)\bigg]^{\xi}_{\xi_{E}}\nonumber\ \\
&=-\frac{1}{2}\,\partial_{x}\big(\frac{p_{1}}{\partial_{x} p_{0}}\big)\big(-\psi'(\xi),\xi\big)+
\frac{1}{2}\,\partial_{x}\big(\frac{p_{1}}{\partial_{x} p_{0}}\big)(a_{E})\nonumber\ \\
&=-\frac{1}{2}\,\partial_{x}\big(\frac{p_{1}}{\partial_{x} p_{0}}\big)\big(-\psi'(\xi),\xi\big)-\sqrt{2}\,C_{1}(a_{E})\label{Real}
\end{align}
and
\begin{equation}\label{Racin2ImDind1}
\sqrt{2}\,\mathrm{Im}\big(D_{1}(\xi)\big)
=\int_{\xi_{E}}^{\xi}T_{1}(\zeta)\,d\zeta+\big[\frac{\psi''}{6\,\alpha}\,\frac{\partial^{3} p_{0}}{\partial x^{3}}+\frac{\alpha'}{4\,\alpha^{2}}\, \frac{\partial^{2} p_{0}}{\partial x^{2}}\big]_{\xi_{E}}^{\xi}
\end{equation}
with
\begin{equation}\label{TINdice1}
T_{1}(\zeta)=\frac{1}{\alpha}\,\Big(p_{2}-\frac{1}{8}\,\frac{\partial^{4} p_{0}}{\partial x^{2} \partial \zeta^{2}}+\frac{\psi''}{12}\,\frac{\partial^{4} p_{0}}{\partial x^{3} \partial \zeta}+\frac{(\psi'')^{2}}{24}\,\frac{\partial^{4} p_{0}}{\partial x^{4}}\Big)+
\frac{1}{8}\,\frac{(\alpha')^{2}}{\alpha^{3}}\,\frac{\partial^{2} p_{0}}{\partial x^{2}} +\frac{1}{6}\,\psi''\,\frac{\alpha'}{\alpha^{2}}\,\frac{\partial^{3} p_{0}}{\partial x^{3}}-\frac{p_{1}}{\alpha^{2}}\,\Big(\partial_{x} p_{1}
-\frac{p_{1}}{2\,\alpha}\,\frac{\partial^{2} p_{0}}{\partial x^{2}}\Big)
\end{equation}
There follows:
\begin{lem}\label{LemmMa10101}
Modulo the integral of an exact form in $\mathcal{A}$, with $T_{1}$ as in (\ref{TINdice1}) we have:
\begin{equation}
\begin{aligned}
\mathrm{Re}\big(D_{1}(\xi)\big)&\equiv0\ \\
\sqrt{2}\,\mathrm{Im}\big(D_{1}(\xi)\big)&\equiv\int_{\xi_{E}}^{\xi}T_{1}(\zeta)\,d\zeta
\end{aligned}
\end{equation}
\end{lem}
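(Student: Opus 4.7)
The plan is to start from the defining formula (\ref{ETSER191}) for $D_{1}(\xi)$, substitute the explicit expressions from Lemma \ref{L1} for $\lambda_{0}$ and its derivatives at $x=-\psi'(\xi)$, and then integrate by parts along $\gamma_{E}$ in order to isolate those pieces of the integrand that are total $\zeta$-derivatives. These boundary pieces represent integrals of exact $1$-forms on $\gamma_{E}$, hence disappear modulo exact forms, which is precisely the content of the lemma. Since $b_{0}(\zeta)=C_{0}|\alpha(\zeta)|^{-1/2}\exp\bigl(i\int p_{1}/\alpha\,ds\bigr)$, the oscillating phase inside the integrand of (\ref{ETSER191}) cancels that coming from the prefactor, and the resulting amplitude is purely algebraic in $\zeta$, which makes the splitting into real and imaginary parts straightforward.

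For the real part, I would use the identity
\[
\partial_{x}p_{1}-\frac{p_{1}}{\alpha}\,\partial_{x}^{2}p_{0}=-\alpha\,\partial_{x}\Bigl(\frac{p_{1}}{\partial_{x}p_{0}}\Bigr)
\]
to recognise the combination arising in (\ref{ETSER185}), once divided by $|\alpha|$, as a total $\zeta$-derivative; the contribution coming from $(\partial_{\zeta}\lambda_{0})(-\psi'(\zeta),\zeta)$ fits in compatibly, so that the real part collapses to the pure boundary expression displayed in (\ref{Real}). This establishes $\mathrm{Re}\bigl(D_{1}(\xi)\bigr)\equiv 0$ modulo exact forms on $\gamma_{E}$.

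For the imaginary part, I would collect all contributions coming from $p_{2}$ and from the Weyl-correction $-\tfrac{1}{8}\partial_{x}^{2}\partial_{\xi}^{2}p_{0}$ in (\ref{ptilde}), together with the imaginary parts of (\ref{ETSER185}) and of the $\zeta$-differentiated version of (\ref{ETSER186}). Integration by parts then transfers the $\zeta$-derivative from the terms carrying $\partial_{x}^{3}p_{0}$ and $\partial_{x}^{2}p_{0}$ onto the prefactors $\psi''/(6\alpha)$ and $\alpha'/(4\alpha^{2})$, producing on the one hand the bulk integrand $T_{1}(\zeta)$ of (\ref{TINdice1}), and on the other hand the boundary bracket of (\ref{Racin2ImDind1}), which is exact and thus drops out. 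The main obstacle is bookkeeping: several terms of different origin carry the same monomials in $\psi''$, $\alpha'$ and high $x$-derivatives of $p_{0}$, and it is essential that after every integration by parts their coefficients combine exactly into the structure of (\ref{TINdice1}) without leaving any stray non-exact remainder. Organising the computation by the power of $\alpha^{-1}$ and by the total order of $x$-derivatives of $p_{0}$ involved will make this cancellation transparent.
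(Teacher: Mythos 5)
Your approach is essentially the one the paper uses (the paper itself does not spell out the calculation — it merely states that equations (\ref{Real}) and (\ref{Racin2ImDind1}) follow from integration by parts in (\ref{ETSER191}), together with the normalization $D_{1}(\xi_{E})=0$, and the Lemma is then an immediate corollary since the extra terms in those equations are evaluated brackets, i.e.\ integrals of exact $1$-forms). In particular you correctly observe the key simplification that $b_{0}(\zeta)|\alpha(\zeta)|^{-1/2}\exp(-i\int_{\xi_E}^\zeta p_{1}/\alpha)\;=C_{0}|\alpha(\zeta)|^{-1}$, so the oscillatory factors in (\ref{ETSER191}) cancel identically, leaving a purely algebraic integrand that splits cleanly into real and imaginary parts; and you correctly reduce the problem to reorganizing that algebraic integrand by integration by parts.

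Two corrections to your sketch. First, the displayed identity has the wrong sign: one has
\begin{equation*}
\partial_{x}\Bigl(\frac{p_{1}}{\partial_{x}p_{0}}\Bigr)=\frac{\alpha\,\partial_{x}p_{1}-p_{1}\,\partial^{2}_{x}p_{0}}{\alpha^{2}},\qquad\text{so}\qquad \partial_{x}p_{1}-\frac{p_{1}}{\alpha}\,\partial^{2}_{x}p_{0}=+\,\alpha\,\partial_{x}\Bigl(\frac{p_{1}}{\partial_{x}p_{0}}\Bigr),
\end{equation*}
not $-\alpha\,\partial_{x}(\cdot)$. Moreover the combination that actually appears in $\lambda_{0}$ from (\ref{ETSER185}) carries $\tfrac{p_{1}}{2\alpha}$ rather than $\tfrac{p_{1}}{\alpha}$, so it is not literally this total derivative: one genuinely needs the integration by parts to redistribute the factor of $2$, which is precisely where the exact-form bracket of (\ref{Real}) comes from. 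Second, the quantity entering (\ref{ETSER191}) is $(\partial_{\zeta}\lambda_{0})\bigl(-\psi'(\zeta),\zeta\bigr)$, which is obtained from Lemma \ref{L1} by the chain rule,
\begin{equation*}
(\partial_{\zeta}\lambda_{0})\bigl(-\psi'(\zeta),\zeta\bigr)=\frac{d}{d\zeta}\Bigl[\lambda_{0}\bigl(-\psi'(\zeta),\zeta\bigr)\Bigr]+\psi''(\zeta)\,(\partial_{x}\lambda_{0})\bigl(-\psi'(\zeta),\zeta\bigr),
\end{equation*}
i.e.\ one must take the total $\zeta$-derivative of (\ref{ETSER185}) \emph{and} add $\psi''$ times (\ref{ETSER186}); your phrase ``the $\zeta$-differentiated version of (\ref{ETSER186})'' has the roles of the two equations inverted. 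Finally, you should also account for the term $\tfrac{1}{2i}\,\partial_{x}\partial_{\xi}p_{1}$ appearing in $\tilde{p}_{2}$ of (\ref{ptilde}): after multiplication by $i\,b_{0}$ and cancellation of the phase it contributes to the \emph{real} part, and it must combine with the $p_{1}$-dependent real part of $(\partial_{\zeta}\lambda_{0})$ to produce the total derivative in (\ref{Real}). With these bookkeeping corrections your plan goes through exactly as in the paper.
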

Passing from Fourier to spatial representation, we can carry the integration in $x$-variable between the focal points $a_{E}$ and $a'_{E}$ , and in $\xi$-variable again near $a'_{E}$. Since $\gamma_{E}$ is smoothly embedded, the microlocal solution $\widehat{u^{a}}$
extends uniquely along $\gamma_{E}$.

Let $u(x,\xi)$ and $v(x,\xi)$ be twoo smooth functions on $\mathcal{A}$, and define the 1-form $\Omega(x,\xi)=u(x,\xi)\,dx+v(x,\xi)\,d\xi$. By Stokes' formula, we have:
\begin{equation*}
\int_{\gamma_{E}}\Omega(x,\xi)=\int\int_{\{p_{0}\leq E\}}\big(\partial_{x} v-\partial_{\xi} u\big)\,dx \wedge{d\xi}
\end{equation*}
According to \cite{CdV1}, we can extend $p_{0}$ inside the disk bounded by $A_{-}$ (which, without loss of generality, may be assumed to contain the origin), so that it coincides with a harmonic oscillator in a neighborhood of a point inside, say $p_{0}(0,0)=0$. Making the symplectic change of coordinates $(x,\xi)\mapsto (t,E)$ in $T^{*}\mathbb{R}$
\begin{equation}\label{St2}
\int\int_{\{p_{0}\leq E\}}\big(\partial_{x} v-\partial_{\xi} u\big)\,dx \wedge{d\xi}=\int_{0}^{E}\int_{0}^{T(E')}\big(\partial_{x} v-\partial_{\xi} u\big)\,dt \wedge{dE'}
\end{equation}
where $T(E')$ is the period of the flow of Hamilton vector field $H_{p_{0}}$ at energy $E'$ \big($T(E')$ being a constant near (0,0)\big). Taking derivative with respect to $E$, we find:
\begin{equation}\label{FORMULASTOKESS}
\frac{d}{dE}\,\int_{\gamma_{E}}\Omega(x,\xi)=\int_{0}^{T(E)}\big(\partial_{x} v-\partial_{\xi} u\big)\,dt
\end{equation}
We compute $\displaystyle\int_{\xi_{E}}^{\xi}T_{1}(\zeta)\,d\zeta$ with $T_{1}$ as in (\ref{TINdice1}), and start to simplify $J_{1}=\displaystyle \int\omega_{1}$, where the 1-form $\omega_{1}$ is expressed in terms of the variable $\xi$ as follows:
\begin{align*}
\omega_{1}\big(-\psi'(\xi),\xi\big)&=\displaystyle \frac{p_{1}\big(-\psi'(\xi),\xi\big)}{\alpha^{2}(\xi)}\,\bigg(\partial_{x} p_{1}\big(-\psi'(\xi),\xi\big)-\frac{p_{1}\big(-\psi'(\xi),\xi\big)}{2\,\alpha(\xi)}\,\displaystyle \frac{\partial^{2} p_{0}}{\partial x^{2}}\big(-\psi'(\xi),\xi\big)\bigg)\;d\xi
\end{align*}
Let $$f_{1}(x,\xi):=\frac{p_{1}^{2}(x,\xi)}{\partial_{x} p_{0}(x,\xi)},$$
then its partial derivative with respect to $x$ is given by:
$$\partial_{x} f_{1}(x,\xi)=\displaystyle \frac{2\,p_{1}(x,\xi)}{\partial_{x} p_{0}(x,\xi)}\,\bigg(\partial_{x} p_{1}(x,\xi)-\frac{p_{1}(x,\xi)}{2\,
\partial_{x} p_{0}(x,\xi)}
\,\displaystyle \frac{\partial^{2} p_{0}(x,\xi)}{\partial x^{2}}\bigg)$$
By (\ref{FORMULASTOKESS}) we get
\begin{equation}\label{GRonDJinD1}
\begin{aligned}
J_{1}&=\frac{1}{2}\,\int_{\gamma_{E}}\frac{\partial_{x}\,f_{1}(x,\xi)}{\partial_{x}\,p_{0}(x,\xi)}\,d\xi=-\frac{1}{2}\,\int_{0}^{T(E)}\partial_{x} f_{1}\big(x(t),\xi(t)\big)\,dt\ \\
     &=-\frac{1}{2}\,\frac{d}{dE}\int_{\gamma_{E}}f_{1}(x,\xi)\,d\xi=-\frac{1}{2}\,\frac{d}{dE}\int_{\gamma_{E}}\frac{p_{1}^{2}(x,\xi)}{\partial_{x} p_{0}(x,\xi)}\,d\xi\ \\
     &=\frac{1}{2}\,\frac{d}{dE}\int_{0}^{T(E)}p_{1}^{2}\big(x(t),\xi(t)\big)\,dt
\end{aligned}
\end{equation}
which is the contribution of $p_{1}$ to the second term $S_{2}$ of generalized action in \big(\cite{CdV1}, Thm2\big). Here $T(E)$ is the period on $\gamma_{E}$. We also have
\begin{equation}\label{}
\int_{\xi_{E}}^{\xi} \frac{1}{\alpha(\zeta)}\,p_{2}\big(-\psi'(\zeta),\zeta\big)\,d\zeta=
\int_{\gamma_{E}}\frac{p_{2}(x,\xi)}{\partial_{x}p_{0}(x,\xi)}\,d\xi=-\int_{0}^{T(E)}p_{2}\big(x(t),\xi(t)\big)\,dt
\end{equation}
In order to compute $T_{1}$ modulo exact forms, it remains to simplify in equation (\ref{TINdice1}) the expression
\begin{align*}
J_{2}&=\int_{\xi_{E}}^{\xi} \frac{1}{\alpha}\,\Big(-\frac{1}{8}\, \frac{\partial^{4} p_{0}}{\partial x^{2}\,\partial \zeta^{2}}+\frac{\psi''}{12}\,\frac{\partial^{4} p_{0}}{\partial x^{3}\,\partial \zeta}+
\frac{(\psi'')^{2}}{24}\,\frac{\partial^{4} p_{0}}{\partial x^{4}}\Big)\,d\zeta+
\frac{1}{8}\,\int_{\xi_{E}}^{\xi} \frac{(\alpha')^{2}}{\alpha^{3}}\, \frac{\partial^{2} p_{0}}{\partial x^{2}}\,d\zeta\ \\
&+\frac{1}{6}\,\int_{\xi_{E}}^{\xi}\,\psi''\,\frac{\alpha'}{\alpha^{2}}\, \frac{\partial^{3} p_{0}}{\partial x^{3}}\,d\zeta+[\frac{\psi''}{6\,\alpha}\, \frac{\partial^{3} p_{0}}{\partial x^{3}}+\frac{\alpha'}{4\,\alpha^{2}}\, \frac{\partial^{2} p_{0}}{\partial x^{2}}]_{\xi_{E}}^{\xi}
\end{align*}
Let
\begin{equation*}
f_{0}(x,\xi):=\frac{\Delta(x,\xi)}{\partial_{x} p_{0}(x,\xi)},
\end{equation*}
where we have set according to \cite{CdV1}
$$\Delta(x,\xi)=\frac{\partial^{2} p_{0}}{\partial x^{2}}\,\frac{\partial^{2} p_{0}}{\partial \xi^{2}}-\big(\frac{\partial^{2} p_{0}}{\partial x\,\partial \xi}\big)^{2}$$
From the eikonal equation (\ref{Jacob}), we deduce
\begin{align*}
\big(\frac{\partial^{2} p_{0}}{\partial \zeta^{2}}\big)\big(-\psi'(\zeta),\zeta\big)&=
\psi'''(\zeta)\,\alpha(\zeta)+\psi''(\zeta)\,\alpha'(\zeta)+
\psi''(\zeta)\,\big(\frac{\partial^{2} p_{0}}{\partial x\,\partial \zeta}\big)\big(-\psi'(\zeta),\zeta\big)\ \\
&=\psi'''(\zeta)\,\alpha(\zeta)+2\,\psi''(\zeta)\,\alpha'(\zeta)+
\big(\psi''(\zeta)\big)^{2}\,\big(\frac{\partial^{2} p_{0}}{\partial x^{2}}\big)\big(-\psi'(\zeta),\zeta\big)
\end{align*}
Consequently,
\begin{equation*}
\Delta\big(-\psi'(\zeta),\zeta\big)=-\big(\alpha'(\zeta)\big)^{2}+\psi'''(\zeta)\,\alpha(\zeta)\,
\big(\frac{\partial^{2} p_{0}}{\partial x^{2}}\big)\big(-\psi'(\zeta),\zeta\big)
\end{equation*}
and
\begin{equation*}
\big(\partial_{x} \Delta\big)\big(-\psi'(\zeta),\zeta\big)=\frac{\partial^{3} p_{0}}{\partial x^{3}}\,\big(\psi'''\,\alpha+2\,\psi''\,\alpha'\big)+
\frac{\partial^{2} p_{0}}{\partial x^{2}}\,\big(\alpha''+\psi'''\,\frac{\partial^{2} p_{0}}{\partial x^{2}}\big)-
2\,\alpha'\,\frac{\partial^{3} p_{0}}{\partial x^{2}\,\partial \zeta}
\end{equation*}
which implies
\begin{equation*}
\frac{(\partial_{x} f_{0})\big(-\psi'(\zeta),\zeta\big)}{\alpha(\zeta)}=
\frac{\psi'''}{\alpha}\,\frac{\partial^{3} p_{0}}{\partial x^{3}}+
2\,\psi''\,\frac{\alpha'}{\alpha^{2}}\,\frac{\partial^{3} p_{0}}{\partial x^{3}}+
\frac{\alpha''}{\alpha^{2}}\,\frac{\partial^{2} p_{0}}{\partial x^{2}}-
2\,\frac{\alpha'}{\alpha^{2}}\,\frac{\partial^{3} p_{0}}{\partial x^{2}\,\partial \zeta}
+\frac{(\alpha')^{2}}{\alpha^{3}}\,\frac{\partial^{2} p_{0}}{\partial x^{2}}
\end{equation*}
By integration by parts, we obtain:
\begin{equation*}
\int_{\xi_{E}}^{\xi}\frac{\psi'''}{\alpha}\,\frac{\partial^{3} p_{0}}{\partial x^{3}}\,d\zeta=
\big[\frac{\psi''}{\alpha}\,\frac{\partial^{3} p_{0}}{\partial x^{3}}\big]_{\xi_{E}}^{\xi}+
\int_{\xi_{E}}^{\xi}\psi''\,\frac{\alpha'}{\alpha^{2}}\,\frac{\partial^{3} p_{0}}{\partial x^{3}}\,d\zeta+
\int_{\xi_{E}}^{\xi}\frac{(\psi'')^{2}}{\alpha}\,
\frac{\partial^{4} p_{0}}{\partial x^{4}}\,d\zeta-
\int_{\xi_{E}}^{\xi}\frac{\psi''}{\alpha}\,\frac{\partial^{4} p_{0}}{\partial x^{3}\,\partial \zeta}\,d\zeta
\end{equation*}
\begin{equation*}
\int_{\xi_{E}}^{\xi}\frac{\alpha''}{\alpha^{2}}\,\frac{\partial^{2} p_{0}}{\partial x^{2}}\,d\zeta=
\big[\frac{\alpha'}{\alpha^{2}}\,\frac{\partial^{2} p_{0}}{\partial x^{2}}\big]_{\xi_{E}}^{\xi}+
2\,\int_{\xi_{E}}^{\xi}\frac{(\alpha')^{2}}{\alpha^{3}}\,\frac{\partial^{2} p_{0}}{\partial x^{2}}\,d\zeta+
\int_{\xi_{E}}^{\xi}\psi''\frac{\alpha'}{\alpha^{2}}\,
\frac{\partial^{3} p_{0}}{\partial x^{3}}\,d\zeta-
\int_{\xi_{E}}^{\xi}\frac{\alpha'}{\alpha^{2}}\,\frac{\partial^{3} p_{0}}{\partial x^{2}\,\partial \zeta}\,d\zeta
\end{equation*}
\begin{equation*}
\int_{\xi_{E}}^{\xi}\frac{\alpha'}{\alpha^{2}}\,\frac{\partial^{3} p_{0}}{\partial x^{2}\,\partial \zeta}\,d\zeta=
-\big[\frac{1}{\alpha}\,\frac{\partial^{3} p_{0}}{\partial x^{2}\,\partial \zeta}\big]_{\xi_{E}}^{\xi}
-\int_{\xi_{E}}^{\xi}\frac{\psi''}{\alpha}\,\frac{\partial^{4} p_{0}}{\partial x^{3}\,\partial \zeta}\,d\zeta+
\int_{\xi_{E}}^{\xi}\frac{1}{\alpha}\,\frac{\partial^{4} p_{0}}{\partial x^{2}\,\partial \zeta^{2}}\,d\zeta
\end{equation*}
and
\begin{align*}\
\int_{\xi_{E}}^{\xi}\frac{\alpha''}{\alpha^{2}}\,\frac{\partial^{2} p_{0}}{\partial x^{2}}\,d\zeta&=
\big[\frac{\alpha'}{\alpha^{2}}\,\frac{\partial^{2} p_{0}}{\partial x^{2}}\big]_{\xi_{E}}^{\xi}+
2\,\int_{\xi_{E}}^{\xi}\frac{(\alpha')^{2}}{\alpha^{3}}\,\frac{\partial^{2} p_{0}}{\partial x^{2}}\,d\zeta+
\int_{\xi_{E}}^{\xi}\psi''\frac{\alpha'}{\alpha^{2}}\,
\frac{\partial^{3} p_{0}}{\partial x^{3}}\,d\zeta+
\big[\frac{1}{\alpha}\,\frac{\partial^{3} p_{0}}{\partial x^{2}\,\partial \zeta}\big]_{\xi_{E}}^{\xi}\ \\
&+\int_{\xi_{E}}^{\xi}\frac{\psi''}{\alpha}\,\frac{\partial^{4} p_{0}}{\partial x^{3}\,\partial \zeta}\,d\zeta
-\int_{\xi_{E}}^{\xi}\frac{1}{\alpha}\,\frac{\partial^{4} p_{0}}{\partial x^{2}\,\partial \zeta^{2}}\,d\zeta
\end{align*}
It immediately follows that
\begin{align*}
\int_{\xi_{E}}^{\xi}\frac{(\partial_{x} f_{0})\big(-\psi'(\zeta),\zeta\big)}{\alpha(\zeta)}\,d\zeta&=
-3\,\int_{\xi_{E}}^{\xi}\frac{1}{\alpha}\,\frac{\partial^{4} p_{0}}{\partial x^{2}\,\partial \zeta^{2}}\,d\zeta+
2\,\int_{\xi_{E}}^{\xi}\frac{\psi''}{\alpha}\,\frac{\partial^{4} p_{0}}{\partial x^{3}\,\partial \zeta}\,d\zeta+
\int_{\xi_{E}}^{\xi}\frac{(\psi'')^{2}}{\alpha}\,\frac{\partial^{4} p_{0}}{\partial x^{4}}\,d\zeta\ \\
&+3\,\int_{\xi_{E}}^{\xi}\frac{(\alpha')^{2}}{\alpha^{3}}\,\frac{\partial^{2} p_{0}}{\partial x^{2}}\,d\zeta
+4\,\int_{\xi_{E}}^{\xi}\psi''\frac{\alpha'}{\alpha^{2}}\,\frac{\partial^{3} p_{0}}{\partial x^{3}}\,d\zeta+
\big[\frac{\psi''}{\alpha}\,\frac{\partial^{3} p_{0}}{\partial x^{3}}\big]_{\xi_{E}}^{\xi}\ \\
&+\big[\frac{\alpha'}{\alpha^{2}}\,\frac{\partial^{2} p_{0}}{\partial x^{2}}\big]_{\xi_{E}}^{\xi}
+3\,\big[\frac{1}{\alpha}\,\frac{\partial^{3} p_{0}}{\partial x^{2}\,\partial \zeta}\big]_{\xi_{E}}^{\xi}
\end{align*}
hence
\begin{equation}\label{}
24\,J_{2}=\int_{\xi_{E}}^{\xi}\frac{(\partial_{x} f_{0})\big(-\psi'(\zeta),\zeta\big)}{\alpha(\zeta)}\,d\zeta
+3\,\big[\frac{\psi''}{\alpha}\,\frac{\partial^{3} p_{0}}{\partial x^{3}}\big]_{\xi_{E}}^{\xi}+
5\,\big[\frac{\alpha'}{\alpha^{2}}\,\frac{\partial^{2} p_{0}}{\partial x^{2}}\big]_{\xi_{E}}^{\xi}-3\,
\big[\frac{1}{\alpha}\,\frac{\partial^{3} p_{0}}{\partial x^{2}\,\partial \zeta}\big]_{\xi_{E}}^{\xi}
\end{equation}
and modulo the integral of an exact form in $\mathcal{A}$
\begin{align*}
J_{2}&\equiv\frac{1}{24}\,\int_{\xi_{E}}^{\xi}\frac{(\partial_{x} f_{0})\big(-\psi'(\zeta),\zeta\big)}{\alpha(\zeta)}\,d\zeta
     =\frac{1}{24}\,\int_{\gamma_{E}}\frac{\partial_{x} f_{0}(x,\xi)}{\partial_{x} p_{0}(x,\xi)}\,d\xi\ \\
     &=-\frac{1}{24}\,\int_{0}^{T(E)}\partial_{x} f_{0}\big(x(t),\xi(t)\big)\,dt
     =-\frac{1}{24}\,\frac{d}{dE}\,\int_{\gamma_{E}}f_{0}(x,\xi)\,d\xi\ \\
     &=-\frac{1}{24}\,\frac{d}{dE}\,\int_{\gamma_{E}}\frac{\Delta(x,\xi)}{\partial_{x} p_{0}(x,\xi)}\,\,d\xi
     =\frac{1}{24}\,\frac{d}{dE}\,\int_{0}^{T(E)}\Delta\big(x(t),\xi(t)\big)\,dt
\end{align*}
Using these expressions, we recover the well known action integrals (see e.g. \cite{CdV1}):
We know that
\begin{equation*}
\frac{d}{dE}\,\int_{0}^{T(E)}\Gamma\big(x(t),\xi(t)\big)\,dt=2\,\int_{0}^{T(E)}\Delta\big(x(t),\xi(t)\big)\,dt
\end{equation*}
where $\Gamma\,dt$ is the restriction to $\gamma_{E}$ of the 1-form $\omega_{0}$ in $\mathbb{R}^{2}$, defined by
\begin{equation*}
\omega_{0}(x,\xi)=\big(\frac{\partial^{2} p_{0}}{\partial x^{2}}\,
\frac{\partial p_{0}}{\partial \xi}-\frac{\partial^{2} p_{0}}{\partial x\,\partial \xi}\,
\frac{\partial p_{0}}{\partial\,x}\big)\,dx+\big(\frac{\partial^{2} p_{0}}{\partial x\,\partial \xi}\,
\frac{\partial p_{0}}{\partial \xi}-
\frac{\partial^{2} p_{0}}{\partial \xi^{2}}\,
\frac{\partial\,p_{0}}{\partial x}\big)\,d\xi
\end{equation*}
By writing
$$\sqrt{2}\,D_{1}(\xi)=\int_{\xi_E}^{\xi}\Omega_{1}(\zeta)$$
we find that
\begin{equation}\label{}
\begin{aligned}
\mathrm{Im}\oint_{\gamma_{E}}\Omega_{1}&=
\frac{1}{24}\,\frac{d}{dE}\,\int_{\gamma_{E}}\Delta\,dt-
\int_{\gamma_{E}}p_{2}\,dt-\frac{1}{2}\,\frac{d}{dE}
\int_{\gamma_{E}}p_{1}^{2}\,dt\ \\
&=\frac{1}{48}\,\big(\frac{d}{dE}\big)^{2}\,\int_{\gamma_{E}}\Gamma\,dt-\int_{\gamma_{E}}p_{2}\,dt-
\frac{1}{2}\,\frac{d}{dE}
\int_{\gamma_{E}}p_{1}^{2}\,dt
\end{aligned}
\end{equation}
Using relation (\ref{Real}), we conclude that
\begin{equation}\label{}
\mathrm{Re}\oint_{\gamma_{E}}\Omega_{1}=0
\end{equation}
\subsection{Well normalized QM mod $\mathcal{O}(h^{2})$ in the spatial representation}
The next task consists in extanding the solutions away from $a_{E}=(x_{E}, \xi_{E})$ in the spatial representation.

Recall that in the Fourier representation and for $\xi$ near $\xi_{E}$, the microlocal solution $\widehat{u^{a}}$ of the eigenvalue equation $\big(P(-hD_{\xi},\xi;h)-E\big)\,\widehat u^{a}(\xi;h)=0$ is given by:
\begin{equation}\label{}
\widehat u^{a}(\xi;h)=e^{\frac{i}{h}\,\psi(\xi)}\,\big(b_{0}(\xi)+h\,b_{1}(\xi)+\mathcal{O}(h^{2})\big)
\end{equation}
Next, applying the inverse semi-classical Fourier transform to $\widehat{u^{a}}$, we obtain:
\begin{equation}\label{}
u^{a}(x;h)=(2\,\pi\,h)^{-1/2}\,\int
e^{\frac{i}{h}\,\big(x\,\xi+\psi(\xi)\big)}\,\big(b_{0}(\xi)+h\,b_{1}(\xi)+\mathcal{O}(h^{2})\big)\,d\xi
\end{equation}
The phase of the oscillatory integral defining $u^{a}$ has two critical points, \(\xi_{+}(x)>\xi_E\) and \(\xi_{-}(x)<\xi_E\). The critical values of the phase are given by:
\[\varphi_{\pm}(x)=x\,\xi_{\pm}(x)+\psi\big(\xi_{\pm}(x)\big)\]
From the relation $x+\psi'(\xi_{\pm}(x))=0$, it follows that:
\begin{equation*}
\partial_{x}\varphi_{\pm}(x)=\xi_{\pm}(x)
\end{equation*}
Since
\begin{equation*}
\varphi_{\pm}(x_{E})=x_{E}\,\xi_{E}+\psi(\xi_{E})=x_{E}\,\xi_{E}
\end{equation*}
we deduce that:
\begin{equation*}
\varphi_{\pm}(x):=\varphi_{\pm}(x_{E},x)=x_{E}\,\xi_{E}+\int_{x_{E}}^{x}\xi_{\pm}(y)\,dy
\end{equation*}
Because the phase has two critical points on the support of $b(\xi;h)$, the contributions from each critical point must be summed. By the stationary phase theorem (\ref{Phasestat0}), we obtain:
\begin{equation}\label{}
\int e^{\frac{i}{h}\,\big(x\,\xi+\psi(\xi)\big)}\,b(\xi;h)\,d\xi=
\sum_{\pm}e^{\frac{i}{h}\,\varphi_{\pm}(x)}\,\Big(\frac{\psi''\big(\xi_{\pm}(x)\big)}{2\,i\,\pi\,h}\Big)^{-\frac{1}{2}}\,
\Big(b_{0}\big(\xi_{\pm}(x)\big)+h\,b_{1}\big(\xi_{\pm}(x)\big)+h\,L_{1}b_{0}\big(\xi_{\pm}(x)\big)+\mathcal{O}(h^{2})\Big)
\end{equation}
where
\begin{equation}\label{LtildeopeRATOR}
L_{1}b_{0}\big(\xi_{\pm}(x)\big)=\sum_{n=0}^{2}\frac{2^{-(n+1)}}{i\,n!\,(n+1)!}\,\bigg<\Big(\psi''\big(\xi_{\pm}(x)\big)\Big)^{-1}\,D_{\xi},
D_{\xi}\bigg>^{n+1}\,(\phi_{x}^{n}\,b_{0})\big(\xi_{\pm}(x)\big)
\end{equation}
and
\begin{equation}\label{}
\phi_{x}(\xi)=x\,\big(\xi-\xi_{\pm}(x)\big)+\psi(\xi)-\psi(\xi_{\pm}(x))-\frac{1}{2}\,\psi''\big(\xi_{\pm}(x)\big)\big(\xi-\xi_{\pm}(x)\big)^{2}=
\mathcal{O}\Big(\big(\xi-\xi_{\pm}(x)\big)^{3}\Big)
\end{equation}
A straightforward calculation shows that:
$$\big<\big(\psi''(\xi_{\pm}(x))\big)^{-1}\,D_{\xi},
D_{\xi}\big>\,b_{0}(\xi_{\pm}(x))=\bigg[-\big(\psi''(\xi)\big)^{-1}\,b_{0}''(\xi)\bigg]_{\xi=\xi_{\pm}(x)}$$
$$\big<\big(\psi''(\xi_{\pm}(x))\big)^{-1}\,D_{\xi},
D_{\xi}\big>^{2}\,(\phi_{x}\,b_{0})(\xi_{\pm}(x))=\bigg[\big(\psi''(\xi)\big)^{-2}\,\big(\psi^{(4)}(\xi)\,b_{0}(\xi)+
4\,\psi^{(3)}(\xi)\,b'_{0}(\xi)\big)\bigg]_{\xi=\xi_{\pm}(x)}$$
and
$$\big<\big(\psi''(\xi_{\pm}(x))\big)^{-1}\,D_{\xi},
D_{\xi}\big>^{3}\,(\phi_{x}^{2}\,b_{0})(\xi_{\pm}(x))=-20\,\bigg[\big(\psi''(\xi)\big)^{-3}\,\big(\psi'''(\xi)\big)^{2}\,b_{0}(\xi)
\bigg]_{\xi=\xi_{\pm}(x)}$$
In a neighborhood of the focal point $a_{E}$ and for $x<x_{E}$, the microlocal solution of $\big(P(x,hD_{x})-E\big)u(x;h)=0$ is given mod $\mathcal{O}(h^{2})$ by:
\begin{equation}\label{uaxH0UAXH}
\begin{aligned}
\!\!\!u^{a}_{\pm}(x;h)&=\sum_{\pm}u^{a}_{\pm}(x;h)\ \\
                &=2^{-1/2}\,\sum_{\pm}e^{\pm i\,\frac{\pi}{4}}\,\Big(\pm \partial_{\xi}p_{0}\big(x,\xi_{\pm}(x)\big)\Big)^{-\frac{1}{2}}
\,\exp\bigg[\frac{i}{h}\,\Big(\varphi_{\pm}(x)-h\,\int_{x_{E}}^{x}
\frac{p_{1}\big(y,\xi_{\pm}(y)\big)}{\partial_{\xi}p_{0}\big(y,\xi_{\pm}(y)\big)}\,dy\Big)\,
\bigg]\,\Big(1+h\,\frac{b_{1}\big(\xi_{\pm}(x)\big)}{b_{0}\big(\xi_{\pm}(x)\big)}+h\,D_{2}\big(\xi_{\pm}(x)\big)\Big),
\end{aligned}
\end{equation}
with $\pm\partial_{\xi}p_{0}\big(x,\xi_{\pm}(x)\big)>0$, and where we define:
\begin{equation}\label{}
D_{2}(\xi)=-\frac{1}{2i}\,(\psi''(\xi))^{-1}\,\frac{b''_{0}(\xi)}{b_{0}(\xi)}+
\frac{1}{8i}\,(\psi''(\xi))^{-2}\,\bigg(\psi^{(4)}(\xi)+4\,\psi^{(3)}(\xi)
\,\frac{b'_{0}(\xi)}{b_{0}(\xi)}\bigg)-\frac{5}{24i}\,(\psi''(\xi))^{-3}\,
(\psi^{(3)}(\xi))^{2}
\end{equation}
and $\psi^{(j)}, j\geq 3$, denotes the $j-\text{th}$ derivative of $\psi$. It is also easy to see that
\begin{equation}\label{b1surbzero}
\frac{b_{1}(\xi)}{b_{0}(\xi)}=\sqrt{2}\,\big(C_{1}(E)+D_{1}(\xi)\big)=-\frac{1}{2}\,\partial_{x}\big(\frac{p_{1}}{\partial_{x} p_{0}}\big)(-\psi'(\xi),\xi)+i\,\sqrt{2}\,\mathrm{Im}\big(D_{1}(\xi)\big)
\end{equation}
We also have
$$\frac{b'_{0}(\xi)}{b_{0}(\xi)}=-\frac{\alpha'(\xi)}{2\,\alpha(\xi)}+\frac{i\,p_{1}\big(-\psi'(\xi),\xi\big)}{\alpha(\xi)}$$
and
$$\frac{b''_{0}(\xi)}{b_{0}(\xi)}=\Big(-\frac{\alpha'(\xi)}{2\,\alpha(\xi)}+\frac{i\,p_{1}\big(-\psi'(\xi),\xi\big)}{\alpha(\xi)}\Big)^{2}+
\frac{d}{d\xi}\Big(-\frac{\alpha'(\xi)}{2\,\alpha(\xi)}+\frac{i\,p_{1}\big(-\psi'(\xi),\xi\big)}{\alpha(\xi)}\Big)$$
First, we observe that $D_{2}\big(\xi_{\pm}(x)\big)$ does not contribute to the homology class of the semi-classical forms defining the action, as it contains no integral term. Thus, the phase in (\ref{uaxH0UAXH}) can be replaced, modulo $\mathcal{O}(h^{3})$ by
\begin{equation}
\begin{aligned}\label{Phasespatiale}
S_{\pm}(x_{E},x;h)&=x_{E}\,\xi_{E}+\int_{x_{E}}^{x}\xi_{\pm}(y)\,dy-
h\,\int_{x_{E}}^{x}\frac{p_{1}\big(y,\xi_{\pm}(y)\big)}{\partial_{\xi}p_{0}\big(y,\xi_{\pm}(y)\big)}\,dy+
\sqrt{2}\,h^{2}\,\mathrm{Im}\Big(D_{1}\big(\xi_{\pm}(x)\big)\Big) \ \\
                  &=x_{E}\,\xi_{E}+\int_{x_{E}}^{x}\xi_{\pm}(y)\,dy-
h\,\int_{x_{E}}^{x}\frac{p_{1}\big(y,\xi_{\pm}(y)\big)}{\partial_{\xi}p_{0}\big(y,\xi_{\pm}(y)\big)}\,dy+
h^{2}\,\int_{x_{E}}^{x}T_{1}\big(\xi_{\pm}(y)\big)\,\xi'_{\pm}(y)\,dy
\end{aligned}
\end{equation}
\begin{prop}\label{SpatiAlNormalisation}
In the spatial representation, the microlocal Wronskian near a focal point $a_{E}$ is given by
\begin{equation}\label{0SpatiAmalisation0}
\mathcal{W}^{a}\big(u^{a},\overline{u^{a}}\big)=\mathcal{W}_{+}^{a}\big(u^{a},\overline{u^{a}}\big)-\mathcal{W}^{a}_{-}\big(u^{a},\overline{u^{a}}\big)
=1+\mathcal{O}(h^{2})
\end{equation}
\end{prop}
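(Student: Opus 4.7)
The plan is to deduce the proposition from the Fourier-representation calculation of subsection \ref{NORMALIsation} by invariance under the semiclassical Fourier transform. The microlocal Wronskian $\mathcal{W}^a_\rho(u^a,\overline{u^a}) = (\frac{i}{h}[P,\chi^a]_\rho u^a | u^a)$ is an $L^2$ pairing, and the $h$-Fourier transform $\mathcal{F}_h$ is unitary on $L^2(\mathbb{R})$ and conjugates $P(x,hD_x)$ to $P(-hD_\xi,\xi)$ with the same Weyl symbol. Hence Parseval's identity yields
\begin{equation*}
\big(\tfrac{i}{h}[P,\chi^a]_\rho\, u^a \,\big|\, u^a\big)_{L^2_x} = \big(\tfrac{i}{h}[P,\chi^a]_\rho\, \widehat{u^a} \,\big|\, \widehat{u^a}\big)_{L^2_\xi},
\end{equation*}
which is exactly the invariance already used in (\ref{A04}) for the full Wronskian and which transfers to each microlocally localized piece.

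To make this rigorous I would first verify that the microlocal splitting $\omega^a_+ \cup \omega^a_-$ is preserved by $\mathcal{F}_h$. Since $\mathcal{F}_h$ acts on $T^*\mathbb{R}$ as the symplectic rotation $(x,\xi)\mapsto(\xi,-x)$, it maps $\gamma_E$ onto itself and sends the focal point $a_E = (x_E,\xi_E)$ to a focal point of the Fourier image; once a common orientation of $\gamma_E$ is fixed, the two oriented arcs on either side of $a_E$ are in one-to-one correspondence, so the $\rho$-localized commutators have the same matrix elements in either representation. Combined with the normalizations $C_0 = 2^{-1/2}$ and $C_1(a_E) = -2^{-3/2}\,\partial_x(p_1/\partial_x p_0)(a_E)$ of (\ref{C00})--(\ref{C01}), which were precisely chosen so that $\mathcal{W}^a(\widehat{u^a},\overline{\widehat{u^a}}) = 1 + \mathcal{O}(h^2)$, this yields the identity (\ref{0SpatiAmalisation0}) at once.

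As a consistency check, one could also perform a direct computation in spatial variables using the two-branch WKB formula (\ref{uaxH0UAXH}) for $u^a$, with cut-off $\chi^a(x,\xi) = \chi_1(x)\chi_2(\xi)$, $\chi_1 \equiv 1$ near $x_E$, so that only $\partial_x p_0\,\chi_2'(\xi) + h\,\partial_x p_1\,\chi_2'(\xi)$ contributes to the symbol of $\frac{i}{h}[P,\chi^a]$. Applying the stationary phase theorem (\ref{Phasestat0}) branch by branch would reproduce the $\mathcal{M}_0^+ + h\mathcal{M}_1^+$ structure of subsection \ref{NORMALIsation}, with leading term $|C_0|^2\,\mathrm{sgn}(\alpha(\xi_E))$ and an $h$-correction involving $\partial_x(p_1/\partial_x p_0)(a_E)$.

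The main obstacle in this direct spatial route, and the reason the Parseval argument is preferable, is the bookkeeping of the cross-terms $u^a_+ \overline{u^a_-}$: these are highly oscillatory on the support of $\chi_2'$, so their phase is non-stationary and they contribute only $\mathcal{O}(h^\infty)$, but confirming this requires careful inspection, and the remaining diagonal contributions must be combined with the Maslov factors $e^{\pm i\pi/4}$ from (\ref{uaxH0UAXH}) so that the $h^0$ and $h^1$ terms match exactly what the Fourier-side normalization predicts. The Parseval-based reduction bypasses this delicate accounting entirely, reducing the proposition to an essentially unitary transport of the Fourier result.
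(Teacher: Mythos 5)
Your Parseval-based argument is correct in essence and takes a genuinely different route from the paper. The paper proves Proposition~\ref{SpatiAlNormalisation} by a direct stationary-phase computation in the spatial variable: it expands $B^a_\pm=\frac{i}{h}[P,\chi^a]_\pm u^a_\pm$, computes $(u^a_\pm|B^a_\pm)=\pm\frac12+\frac{h}{2}K_1+\frac{ih}{4}K_2+\mathcal{O}(h^2)$, and then verifies by an integration by parts that $K_2=2iK_1$, so that the order-$h$ terms cancel; the cross terms are dismissed as $\mathcal{O}(h^\infty)$ by non-stationarity. You instead observe that $\mathcal F_h$ is unitary, conjugates $P(x,hD_x)$ to $P(-hD_\xi,\xi)$, conjugates $\chi^a(x,hD_x)$ to $\chi^a(-hD_\xi,\xi)$, and that the microlocal $\rho$-decomposition of the commutator (a phase-space, hence coordinate-invariant, restriction to $\rho(\xi-\xi_E)>0$ on $\gamma_E$) is carried covariantly by the conjugation; then the half-Wronskians are equal as matrix elements of unitarily equivalent operators, and (\ref{0SpatiAmalisation0}) is inherited from the already-established Fourier-side normalization. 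This is a shorter and more conceptual proof. What the paper's longer route buys is an independent consistency check on the stationary-phase machinery in the spatial picture and, in particular, on the ingredients $D_2$, $Z$, $\theta_\pm$ that are reused downstream in computing the Gram matrix; your reduction correctly points out that none of that bookkeeping is logically needed for the statement itself.

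One point to tighten: you assert that $\mathcal F_h$ ``maps $\gamma_E$ onto itself and sends the focal point $a_E$ to a focal point of the Fourier image.'' Neither is accurate. The metaplectic conjugation by $\mathcal F_h$ corresponds to $\kappa(x,\xi)=(\xi,-x)$, which sends $\gamma_E$ to the rotated curve $\kappa(\gamma_E)$, and the focal point (where $T_{a_E}\gamma_E$ is vertical) is mapped to a point where the tangent is \emph{horizontal}, i.e.\ a regular, non-focal point of $\kappa(\gamma_E)$ --- this is exactly why passing to the Fourier representation resolves the caustic. The correct statement you need is weaker and unaffected: the microlocal support of $\frac{i}{h}[P,\chi^a]_\rho$ is transported by $\kappa$ to the microlocal support of $\frac{i}{h}[P(-hD_\xi,\xi),\chi^a(-hD_\xi,\xi)]_\rho$, and this identification respects the oriented side $\rho(\xi-\xi_E)>0$, because $\kappa$ preserves the $\xi$-coordinate. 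With that correction, the rest of your argument goes through modulo $\mathcal{O}(h^\infty)$, which is the appropriate accuracy for a microlocal restriction.
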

\begin{proof}
Let $\chi^{a}\in C^{\infty}_{0}(\mathbb{R}^{2})$ be a cut-off function as defined in Subsection \ref{NORMALIsation}. The Weyl symbol of the commutator $\displaystyle \frac{i}{h}\,[P,\chi^{a}]$ is given by:
\begin{equation*}
c(x,\xi;h)=\big(\partial_{\xi}\,p_{0}(x,\xi)+h\,\partial_{\xi}\,p_{1}(x,\xi)\big)\,\chi'_{1}(x)+
\mathcal{O}(h^{2})=c_{0}(x,\xi)+h\,c_{1}(x,\xi)+\mathcal{O}(h^{2})
\end{equation*}
Let:
\begin{equation}\label{FAPM}
B^{a}_{\pm}:=\frac{i}{h}[P,\chi^{a}]_{\pm}u^{a}_{\pm}
\end{equation}
so:
$$B^{a}_{\pm}(x;h)=\frac{1}{2\pi h}\,\int\int e^{\frac{i}{h}\,(x-y)\,\eta}\,c\big(\frac{x+y}{2},\eta;h\big)\,u^{a}_{\pm}(y;h)\,dy\,d\eta$$
For fixed $x$, the phase of the oscillatory integral defining $F^{a}_{\pm}(x;h)$ is:
\begin{equation*}
\phi_{x}^{\pm}(y,\eta)=(x-y)\,\eta+\varphi_{\pm}(y).
\end{equation*}
Its critical points are:
\begin{equation*}
(y_{c}(x),\eta_{c}^{\pm}(x))=\big(x,\xi_{\pm}(x)\big),
\end{equation*}
and the corresponding critical values are:
\begin{equation*}
\phi_{x}^{\pm}\big(y_{c}(x),\eta_{c}^{\pm}(x)\big)=\varphi_{\pm}(x).
\end{equation*}
A direct calculation shows that the Hessian matrix is:
\begin{equation*}
\big(\mathrm{Hess}\,\phi^{\pm}_{x}\big)(x,\xi_{\pm}(x))=
\left(
\begin{array}{cc}
\xi'_{\pm}(x) & -1 \\
-1 & 0 \\
\end{array}
\right)
\end{equation*}
We define:
$$u_{x}^{\pm}(y,\eta;h)=c(\frac{x+y}{2},\eta;h)\,
\big(\pm\,\partial_{\xi}p_{0}(y,\xi_{\pm}(y))\big)^{-\frac{1}{2}}\,\exp\big[-i\,\int_{x_{E}}^{y}
\frac{p_{1}(z,\xi_{\pm}(z))}{\partial_{\xi}p_{0}(z,\xi_{\pm}(z))}\,dz\big]\,\big(1+h\,Z(\xi_{\pm}(y))+\mathcal{O}(h^{2})\big)$$
where:
$$Z(\xi_{\pm}(y))=-\frac{1}{2}\,\partial_{x}\big(\frac{p_{1}}{\partial_{x} p_{0}}\big)(y,\xi_{\pm}(y))+i\,\sqrt{2}\,\mathrm{Im}\big(D_{1}(\xi_{\pm}(y))\big)+
D_{2}(\xi_{\pm}(y))$$
Thus, the leading term of $u_{x}^{\pm}(y,\eta;h)$ is:
$$u_{x}^{(0,\pm)}(y,\eta)=c_{0}(\frac{x+y}{2},\eta)\,
\big(\pm\,\partial_{\xi}p_{0}(y,\xi_{\pm}(y))\big)^{-\frac{1}{2}}\,\exp\big[-i\,\int_{x_{E}}^{y}
\frac{p_{1}(z,\xi_{\pm}(z))}{\partial_{\xi}p_{0}(z,\xi_{\pm}(z))}\,dz\big]=c_{0}(\frac{x+y}{2},\eta)\,v_{\pm}(y).$$
By the stationary phase theorem (\ref{Phasestat0}), we obtain:
\begin{equation*}
B^{a}_{\pm}(x;h)=\frac{1}{\sqrt{2}}\,e^{\pm i\,\frac{\pi}{4}}\,e^{\frac{i}{h}\,
\varphi_{\pm}(x)}\,\bigg(u_{x}^{\pm}\big(x,\xi_{\pm}(x);h\big)+h\,L_{1}u_{x}^{(0,\pm)}\big(x,\xi_{\pm}(x)\big)+\mathcal{O}(h^{2})\bigg)
\end{equation*}
where:
\begin{equation}\label{}
L_{1}u_{x}^{(0,\pm)}\big(x,\xi_{\pm}(x)\big)=\sum_{n=0}^{2}\frac{2^{-(n+1)}}{i\,n!\,(n+1)!}\,\big(2\,\frac{\partial^{2}}{\partial y\,\partial \eta}+\xi'_{\pm}(x)\,\frac{\partial^{2}}{\partial \eta^{2}}\big)^{n+1}(\psi^{n}_{x}\,u^{(0,\pm)}_{x})(x,\xi_{\pm}(x))
\end{equation}
and
\begin{equation}\label{}
\psi_{x}(y,\eta)=(x-y)\,\xi_{\pm}(x)+\varphi_{\pm}(y)-\varphi_{\pm}(x)-\frac{1}{2}\,\xi'_{\pm}(x)\,(y-x)^{2}=\mathcal{O}\big((y-x)^{3}\big)
\end{equation}
A few calculations show that:
$$\big(2\,\frac{\partial^{2}}{\partial y\,\partial \eta}+\xi'_{\pm}(x)\,\frac{\partial^{2}}{\partial \eta^{2}}\big)u^{(0,\pm)}_{x}(x,\xi_{\pm}(x))=2\,v'_{\pm}(x)\,s_{\pm}(x)+v_{\pm}(x)\,s'_{\pm}(x),$$
where:
$$s_{\pm}(x)=(\frac{\partial c_{0}}{\partial \xi})(x,\xi_{\pm}(x)).$$
Moreover:
$$\big(2\,\frac{\partial^{2}}{\partial y\,\partial \eta}+\xi'_{\pm}(x)\,\frac{\partial^{2}}{\partial \eta^{2}}\big)^{n+1}(\psi^{n}_{x}\,u^{(0,\pm)}_{x})(x,\xi_{\pm}(x))=0,\quad\forall n\in \{1,2\}.$$
It is easy to see that:
$$v'_{\pm}(x)=\theta_{\pm}(x)\,v_{\pm}(x),$$
where:
$$\theta_{\pm}(x)=-
\frac{1}{\psi''(\xi_{\pm}(x))\,\alpha(\xi_{\pm}(x))}\,\bigg(i\,p_{1}\big(x,\xi_{\pm}(x)\big)-
\frac{\psi'''(\xi_{\pm}(x))\,\alpha(\xi_{\pm}(x))+\psi''(\xi_{\pm}(x))\,
\alpha'(\xi_{\pm}(x))}{2\,\psi''(\xi_{\pm}(x))}\bigg)$$
and:
$$c_{0}\big(x,\xi_{\pm}(x)\big)\,\big(\pm\,\partial_{\xi}p_{0}(x,\xi_{\pm}(x))\big)^{-\frac{1}{2}}=
\pm\,\big(\pm\,\partial_{\xi}p_{0}(x,\xi_{\pm}(x))\big)^{\frac{1}{2}}\,\chi'_{1}(x)$$
Consequently:
\begin{align*}
B^{a}_{\pm}(x;h)&=\pm\,2^{-1/2}\,e^{\pm i\,\frac{\pi}{4}}\,\exp\bigg[\frac{i}{h}\,\bigg(\varphi_{\pm}(x)-h\,\int_{x_{E}}^{x}
\frac{p_{1}\big(y,\xi_{\pm}(y)\big)}{\partial_{\xi}p_{0}\big(y,\xi_{\pm}(y)\big)}\,dy\bigg)\,
\bigg]\,\big(\pm\,\partial_{\xi}p_{0}(x,\xi_{\pm}(x))\big)^{\frac{1}{2}}\,\chi'_{1}(x)\ \\
&\times \bigg(1+h\,Z(\xi_{\pm}(x))+h\,\frac{c_{1}(x,\xi_{\pm}(x))}{c_{0}(x,\xi_{\pm}(x))}+
\big(\frac{2\,s_{\pm}(x)\,\theta_{\pm}(x)+s'_{\pm}(x)}{2\,i\,c_{0}(x,\xi_{\pm}(x))}\big)\,h+\mathcal{O}(h^{2})\bigg).
\end{align*}
Next, observing that:
$$s_{\pm}(x)=(\frac{\partial^{2} p_{0}}{\partial \xi^{2\,}})(x,\xi_{\pm}(x))\,\chi'_{1}(x)=\omega_{\pm}(x)\,\chi'_{1}(x),$$
and that:
$$\partial_{\xi} p_{0}(x,\xi_{\pm}(x))=\psi''(\xi_{\pm}(x))\,\alpha(\xi_{\pm}(x)),$$
we obtain:
\begin{equation}
\begin{aligned}\label{ASYmPTOTiCFa}
B^{a}_{\pm}(x;h)&=\pm\,2^{-1/2}\,e^{\pm i\,\frac{\pi}{4}}\,\exp\bigg[\frac{i}{h}\,\bigg(\varphi_{\pm}(x)-h\,\int_{x_{E}}^{x}
\frac{p_{1}\big(y,\xi_{\pm}(y)\big)}{\partial_{\xi}p_{0}\big(y,\xi_{\pm}(y)\big)}\,dy\bigg)\,
\bigg]\,\big(\pm\,\partial_{\xi}p_{0}(x,\xi_{\pm}(x))\big)^{\frac{1}{2}}\,\chi'_{1}(x)\ \\
&\times \bigg(1+h\,Z(\xi_{\pm}(x))+h\,\frac{\partial_{\xi} p_{1}(x,\xi_{\pm}(x))}{\partial_{\xi} p_{0}(x,\xi_{\pm}(x))}-
\frac{i\,h\,\omega_{\pm}(x)\,\theta_{\pm}(x)}{\partial_{\xi} p_{0}(x,\xi_{\pm}(x))}-\frac{i\,h}{2}\,\frac{\displaystyle\frac{d}{dx}\big(\omega_{\pm}(x)\,\chi'_{1}(x)\big)}{\partial_{\xi} p_{0}(x,\xi_{\pm}(x))\,\chi'_{1}(x)}+\mathcal{O}(h^{2})\bigg).
\end{aligned}
\end{equation}
This gives:
\begin{align*}
(u^{a}_{+}|B^{a}_{+})&=\frac{1}{2}\,\int_{x_{E}}^{+\infty}\chi'_{1}(x)\,dx+\frac{h}{2}\,\int_{x_{E}}^{+\infty}\bigg(2\,\mathrm{Re}\big(Z(\xi_{+}(x))\big)+
\frac{\partial_{\xi} p_{1}(x,\xi_{+}(x))}{\psi''(\xi_{+}(x))\,\alpha(\xi_{+}(x))}+\frac{i\,\omega_{+}(x)\,
\overline{\theta_{+}(x)}}{\psi''(\xi_{+}(x))\,\alpha(\xi_{+}(x))}\bigg)\,\chi'_{1}(x)\,dx\ \\
&+\frac{i\,h}{4}\,\int_{x_{E}}^{+\infty}\frac{1}{\psi''(\xi_{+}(x))\,\alpha(\xi_{+}(x))}\,\displaystyle\frac{d}{dx}\big(\omega_{+}(x)\,\chi'_{1}(x)\big)\,dx+\mathcal{O}(h^{2})\ \\
&=\frac{1}{2}+\frac{h}{2}\,K_{1}+\frac{i\,h}{4}\,K_{2}+\mathcal{O}(h^{2}).
\end{align*}
A simple calculation shows that:
$$2\,\mathrm{Re}\big(Z(\xi_{+}(x))\big)+
\frac{\partial_{\xi} p_{1}(x,\xi_{+}(x))}{\psi''(\xi_{+}(x))\,\alpha(\xi_{+}(x))}+\frac{i\,\omega_{+}(x)\,
\overline{\theta_{+}(x)}}{\psi''(\xi_{+}(x))\,\alpha(\xi_{+}(x))}=
\frac{\omega_{+}(x)}{\psi''(\xi_{+}(x))\,\alpha(\xi_{+}(x))}
\,\bigg(i\,\overline{\theta_{+}(x)}+\frac{p_{1}(x,\xi_{+}(x))}{\psi''(\xi_{+}(x))\,\alpha(\xi_{+}(x))}\bigg)$$
$$=\frac{i\,\omega_{+}(x)}{2\,\big(\psi''(\xi_{+}(x))\big)^{3}\,\big(\alpha(\xi_{+}(x))\big)^{2}}\,\bigg(\psi'''(\xi_{+}(x))\,\alpha(\xi_{+}(x))+
\psi''(\xi_{+}(x))\,\alpha'(\xi_{+}(x))\bigg).$$
Hence:
$$K_{1}=\frac{i}{2}\,\int_{x_{E}}^{+\infty}\frac{\omega_{+}(x)}{\big(\psi''(\xi_{+}(x))\big)^{3}\,\big(\alpha(\xi_{+}(x))\big)^{2}}\,
\bigg(\psi'''(\xi_{+}(x))\,\alpha(\xi_{+}(x))+
\psi''(\xi_{+}(x))\,\alpha'(\xi_{+}(x))\bigg)\,\chi'_{1}(x)\,dx.$$
Here, we used the fact that:
$$\omega_{+}(x):=(\frac{\partial^{2} p_{0}}{\partial \xi^{2}})(x,\xi_{+}(x))=\psi'''(\xi_{+}(x))\,\alpha(\xi_{+}(x))+2\,\psi''(\xi_{+}(x))\,\alpha'(\xi_{+}(x))+\big(\psi''(\xi_{+}(x))\big)^{2}\,(\frac{\partial^{2} p_{0}}{\partial x^{2}})(x,\xi_{+}(x)).$$
Integrating by parts gives:
\begin{align*}
K_{2}&=\bigg[\frac{\omega_{+}(x)\,\chi'_{1}(x)}{\psi''(\xi_{+}(x))\,\alpha(\xi_{+}(x))}\bigg]_{x_{E}}^{+\infty}-
\int_{x_{E}}^{+\infty}\frac{d}{dx}\big(\frac{1}{\psi''(\xi_{+}(x))\,\alpha(\xi_{+}(x))}\big)\,\omega_{+}(x)\,\chi'_{1}(x)\,dx\ \\
&=-\int_{x_{E}}^{+\infty}\frac{\omega_{+}(x)}{\big(\psi''(\xi_{+}(x))\big)^{3}\,\big(\alpha(\xi_{+}(x))\big)^{2}}\,
\bigg(\psi'''(\xi_{+}(x))\,\alpha(\xi_{+}(x))+
\psi''(\xi_{+}(x))\,\alpha'(\xi_{+}(x))\bigg)\,\chi'_{1}(x)\,dx\ \\
&=2\,i\,K_{1}
\end{align*}
Thus, we have:
$$(u^{a}_{+}|B^{a}_{+})=\frac{1}{2}+\mathcal{O}(h^{2}),$$
and similarly:
$$(u^{a}_{-}|B^{a}_{-})=-\frac{1}{2}+\mathcal{O}(h^{2}),$$
Consequently:
$$(u^{a}|B^{a}_{+}-B^{a}_{-})=1+\mathcal{O}(h^{2}).$$
Note that the mixed terms $(u^{a}_{\pm}|B^{a}_{\mp})$ are $\mathcal{O}(h^{\infty})$ because the phase is non-stationary.
\end{proof}
\subsection{WKB solutions mod $\mathcal{O}(h^{2})$ in the spatial representation}\label{Section2.1}
We begin by constructing the WKB solutions $u^{a}_{\rho}(x;h)=u^{a}_{\pm}(x;h)$ starting from the focal point $a=a_E$. These solutions are uniformly valid with respect to $h$ for $x$ in any interval $I\subset\subset]x'_E,x_E[$. The solutions take the form:
\begin{equation}\label{2.1}
u^{a}_{\rho}(x;h)=a_{\rho}(x;h)\,e^{\frac{i}{h}\,\varphi_{\rho}(x)},
\end{equation}
where $a_{\rho}(x;h)$ is a formal series in $h$, which we shall compute with $h^2$ accuracy
\begin{equation*}
a_{\rho}(x;h)=a_{\rho,0}(x)+h\,a_{\rho,1}(x)+h^{2}\,a_{\rho,2}(x)+\cdots.
\end{equation*}
The phase $\varphi_{\rho}(x)$ is a real smooth function that satisfies the eikonal equation
\begin{equation}\label{2.2}
p_{0}\big(x,\varphi'_{\rho}(x)\big)=E.
\end{equation}
For simplicity we shall omit indices $\rho=\pm$ whenever no confusion may occur. Let $Q(x,hD_x;h)=e^{-\frac{i}{h}\,\varphi(x)}P(x,hD_x)e^{\frac{i}{h}\,\varphi(x)}$, which is an $h$-pseudo-differential operator. Its action on $a(x;h)$ is given by:
\begin{equation*}
(Q-E)a(x;h)=(2\pi h)^{-1}\,\int\int e^{\frac{i}{h}\,(x-y)\,\theta}\,p\big(\frac{x+y}{2},\theta+F(x,y);h\big)\,a(y;h)\,dy\,d\theta,
\end{equation*}
where $F(x,y)=\displaystyle\int_0^1\varphi'\big(x+t(y-x)\big)\,dt$. Applying stationary phase theorem (\ref{A1}) at order 2 (see Appendix), we find modulo $\mathcal{O}(h^{3})$:
\begin{equation}\label{2.3}
\begin{aligned}
&\big(Q(x,hD_x;h)-E\big)a(x;h)=
\Big(p\big(x,\varphi'(x);h\big)-E\Big)\,a(x;h)
+\frac{h}{i}\,\Big(\beta(x;h)\,\partial_{x}a(x;h)+\frac{1}{2}\,\partial_{x}\beta(x;h)\,a(x;h)\Big)\cr
&-h^{2}\,\Big(\frac{1}{8}\,\partial_{x}r(x;h)\,a(x;h)+\frac{1}{8}\,\varphi''(x)\,\partial_{x}\theta(x;h)\,a(x;h)+\frac{1}{2}\,
\partial_{x}\gamma(x;h)\,\partial_{x}a(x;h)+\frac{1}{2}\,\gamma(x;h)\,\frac{\partial^{2} a(x;h)}{\partial x^{2}}+\frac{1}{6}\,\varphi'''(x)\,
\theta(x;h)\,a(x;h)\Big).\cr
\end{aligned}
\end{equation}
Suppose now that $p(x,\xi;h)$ is real, $p_{0}(x_{E},\xi_{E})=E$, $(\displaystyle\frac{\partial p_{0}}{\partial \xi})(x_{E},\xi_{E})\neq 0$. We look for formal solutions \big(i.e in the sense of formal classical symbols\big) of
\begin{equation}\label{2.31}
\big(P(x,h\,D_{x};h)-E\big)\big(a(x;h)
\,e^{\frac{i}{h}\,\varphi(x)}\big)=0\Leftrightarrow\big(Q(x,h\,D_{x};h)-E\big)a(x;h)=0.
\end{equation}
Once the eikonal equation (\ref{2.2}) is satisfied, the first transport equation is obtained by setting the $\mathcal{O}(h)$ term in (\ref{2.3}) to zero:
\begin{equation}\label{2.4}
\beta_{0}(x)\,a'_{0}(x)+\Big(i\,p_{1}\big(x,\varphi'(x)\big)+\frac{1}{2}\,\beta_{0}'(x)\Big)\,a_{0}(x)=0.
\end{equation}
Its solutions are of the form:
\begin{equation}\label{2.5}
a_{0}(x)=\widetilde{C_{0}}\,|\beta_{0}(x)|^{-\frac{1}{2}}\,\exp\Big(-i\,\int_{x_{E}}^{x}\frac{p_{1}\big(y,\varphi'(y)\big)}{\beta_{0}(y)}\,dy\Big),
\end{equation}
$\widetilde{C_{0}}$ being so far an arbitrary constant.

Next, setting the $\mathcal{O}(h^{2})$ term in (\ref{2.3}) to zero yields a differential equation for $a_{1}(x)$:
\begin{equation}\label{2.6}
\begin{aligned}
&\beta_{0}(x)\,a'_{1}(x)+\Big(i\,p_{1}\big(x,\varphi'(x)\big)+\frac{1}{2}\,\beta_{0}'(x)\Big)
\,a_{1}(x)=-\beta_{1}(x)\,a'_{0}(x)-\Big(i\,p_{2}\big(x,\varphi'(x)\big)+\frac{1}{2}\,\beta'_{1}(x)\Big)\,a_{0}(x)\cr
&+i\,\Big(\frac{1}{8}\,r'_{0}(x)\,a_{0}(x)+\frac{1}{8}\,\varphi''(x)\,\theta'_{0}(x)\,a_{0}(x)+\frac{1}{2}\,
\gamma'_{0}(x)\,a'_{0}(x)+\frac{1}{2}\,\gamma_{0}(x)\,a''_{0}(x)+\frac{1}{6}\,\varphi'''(x)\,\theta_{0}(x)\,a_{0}(x)\Big).
\end{aligned}
\end{equation}
Here, we have introduced the notations:
\begin{equation*}
\beta_{0}(x)=(\frac{\partial p_{0}}{\partial \xi})\big(x,\varphi'(x)\big),\quad r_{0}(x)=
(\frac{\partial^{3} p_{0}}{\partial x \partial \xi^{2}})\big(x,\varphi'(x)\big),\quad\gamma_{0}(x)=
(\frac{\partial^{2} p_{0}}{\partial \xi^{2}})\big(x,\varphi'(x)\big),\quad\theta_{0}(x)=(\frac{\partial^{3} p_{0}}{\partial \xi^{3}})\big(x,\varphi'(x)\big).
\end{equation*}
The homogeneous equation associated with (\ref{2.6}) is the same as (\ref{2.4}); so we are looking for a particular solution of (\ref{2.6}), integrating from $x_E$, of the form
\begin{equation}\label{2.61}
a_1(x)=\widetilde{D_{1}}(x)\,|\beta_{0}(x)|^{-\frac{1}{2}}\,\exp\Big(-i\,\int_{x_{E}}^{x}\frac{p_{1}\big(y,\varphi'(y)\big)}{\beta_{0}(y)}\,dy\Big).
\end{equation}
Alternatively, we could integrate (\ref{2.6}) from $x'_E$ instead of $x_E$. So our main task will consist in computing $\widetilde{D_{1}}(x)$ as a multivalued function, due to the presence of the turning points, in the same way we have determined $D_1(\xi)$ in \cite{IfaLouRo} (Formula (3.5)), using Fourier representation. We solve (\ref{2.6}) by the method of variation of constants, and find
\begin{equation}\label{2.7}
(\widetilde{C_{0}})^{-1}\,\mathrm{Re}\big(\widetilde{D_{1}}(x)\big)=-\frac{1}{2}\,\Big[\partial_{\xi}(\frac{p_{1}}{\partial_{\xi} p_{0}})\big(y,\varphi'(y)\big)\Big]_{x_{E}}^{x},
\end{equation}
\begin{equation}\label{2.8}
\begin{aligned}
&(\widetilde{C_{0}})^{-1}\,\mathrm{Im}\big(\widetilde{D_{1}}(x)\big)=\int_{x_{E}}^{x} \frac{1}{\beta_{0}}\,\bigg(-p_{2}+\frac{1}{8}\,
\frac{\partial^{4} p_{0}}{\partial y^{2} \partial \xi^{2}}+\frac{\varphi''}{12}\,\frac{\partial^{4} p_{0}}{\partial y \partial \xi^{3}}-
\frac{(\varphi'')^{2}}{24}\,\frac{\partial^{4} p_{0}}{\partial \xi^{4}}\bigg)\,dy-
\frac{1}{8}\,\int_{x_{E}}^{x} \frac{(\beta'_{0})^{2}}{\beta_{0}^{3}}\,\frac{\partial^{2} p_{0}}{\partial \xi^{2}}\,dy\cr
&+\frac{1}{6}\,\int_{x_{E}}^{x}\,\varphi''\,\frac{\beta'_{0}}{\beta_{0}^{2}}\,\frac{\partial^{3} p_{0}}{\partial \xi^{3}}\,dy+
\int_{x_{E}}^{x} \frac{p_{1}}{\beta_{0}^{2}}\,\bigg(\partial_{\xi} p_{1}-\frac{p_{1}}{2\,\beta_{0}}\,\frac{\partial^{2} p_{0}}{\partial \xi^{2}}\bigg)\,dy+
\Big[\frac{\varphi''}{6\,\beta_{0}}\,\frac{\partial^{3} p_{0}}{\partial \xi^{3}}-\frac{\beta_{0}'}{4\,\beta_{0}^{2}}\,\frac{\partial^{2} p_{0}}{\partial \xi^{2}}\Big]_{x_{E}}^{x}.
\end{aligned}
\end{equation}
Function $\widetilde{D_{1}}(x)$ can be normalized by
$$\widetilde{D_{1}}(x_{E})=0$$
The general solution of (\ref{2.6}) is then:
\begin{equation}\label{2.9}
a_{1}(x)=\big(\widetilde{C_{1}}+\widetilde{D_{1}}(x)\big)\,|\beta_{0}(x)|^{-\frac{1}{2}}\,
\exp\Big(-i\,\int_{x_{E}}^{x}\frac{p_{1}\big(y,\varphi'(y)\big)}{\beta_{0}(y)}\,dy\Big).
\end{equation}
Consequently,
$$a(x;h)=\Big(\widetilde{C_{0}}+h\,\big(\widetilde{C_{1}}+\widetilde{D_{1}}(x)\big)+\mathcal{O}(h^{2})\Big)\,|\beta_{0}(x)|^{-\frac{1}{2}}\,
\exp\Big(-i\,\int_{x_{E}}^{x}\frac{p_{1}\big(y,\varphi'(y)\big)}{\beta_{0}(y)}\,dy\Big).$$
Repeating this construction for the other branch $(\rho=-1)$ yields the two branches of WKB solutions:
\begin{equation}\label{2.10}
u^{a}_{\pm}(x;h)=|\beta^{\pm}_{0}(x)|^{-\frac{1}{2}}\,e^{\frac{i}{h}\,S_{\pm}(x_{E},x;h)}\,
\Big(\widetilde{C_{0}}+h\,\big(\widetilde{C_{1}}+\widetilde{D_{1}}^{\pm}(x)\big)+\mathcal{O}(h^{2})\Big),
\end{equation}
where
\begin{equation}\label{2.11}
S_{\pm}(x_{E},x;h)=\varphi_{\pm}(x_{E})+\int_{x_{E}}^{x}\xi_{\pm}(y)\,dy-h\,\int_{x_{E}}^{x}\frac{p_{1}\big(y,\varphi'_{\pm}(y)\big)}{\beta_{0}^{\pm}(y)}dy,
\end{equation}
$$\beta^{\pm}_{0}(x)=(\partial_{\xi} p_{0})\big(x,\varphi'_{\pm}(x)\big).$$
Here we have used that $\varphi_{\pm}(x)=\varphi_{\pm}(x_E)+\displaystyle\int_{x_E}^x\xi_{\pm}(y)\,dy$, with $p_0\big(x,\xi_{\pm}(x)\big)=E$.

Normalization with respect to the "flux norm" consists as above in computing $B^{a}_{\pm}=\displaystyle\frac{i}{h}\,[P,\chi^{a}]_{\pm}u^{a}$ by stationary phase theorem (\ref{Phasestat0}) modulo $\mathcal{O}(h^{2})$. Assuming already $\widetilde{C_{0}}$, $\widetilde{C_{1}}$ to be real, a simple calculation using integration by parts
yields $\widetilde{C_{0}}=C_{0}=2^{-1/2}$, and
$$\widetilde{C_{1}}=\widetilde{C_{1}}(a_{E})=-2^{-3/2}\,\partial_{\xi}\big(\frac{p_{1}}{\partial_{\xi}p_{0}}\big)(a_{E}).$$
As a result, outside any neighborhood of $x_{E}$, we have
\begin{equation}\label{2.12}
u_{\pm}(x;h)=|\beta_{0}^{\pm}(x)|^{-1/2}\,e^{\frac{i}{h}\,S_{\pm}(x_{E},x;h)}\,
\big(\widetilde{C_{0}}+h\,\widetilde{C_{1}}+h\,\widetilde{D_{1}}^{\pm}(x)+\mathcal{O}(h^{2})\big),
\end{equation}
with
$$S_{\pm}(x_{E},x;h)=
\varphi_{\pm}(x_{E})+\int_{x_{E}}^{x}\xi_{\pm}(y)\,dy-h\,\int_{x_{E}}^{x}\frac{p_{1}\big(y,\varphi'_{\pm}(y)\big)}{\beta_{0}^{\pm}(y)}\,dy$$
From (\ref{2.12}) we can recover the homology class of generalized action, considering the superposition $u(x;h)=e^{i\pi/4}\,u_{+}(x;h)+e^{-i\pi/4}\,u_{-}(x;h)$ near $a_{E}$. The argument is then similar to that of \cite{AIfaRrou}, formula (1).
\subsection{Bohr-Sommerfeld quantization rule}
Recall from (\ref{Phasespatiale}) the modified phase function of the microlocal solutions $u^{a}_{\pm}$ near the focal point $a_{E}$, accurate to $\mathcal{O}(h^{2})$. Similarly, the phase for the asymptotic solution near the other focal point $a'_{E}$ is given by:
\begin{equation}\label{2.13}
S_{\pm}(x'_{E},x;h)=x'_{E}\,\xi'_{E}+\int_{x'_{E}}^{x}\xi_{\pm}(y)\,dy-
h\,\int_{x'_{E}}^{x}\frac{p_{1}\big(y,\xi_{\pm}(y)\big)}{\partial_{\xi}p_{0}\big(y,\xi_{\pm}(y)\big)}\,dy+
h^{2}\,\int_{x'_{E}}^{x}T_{1}\big(\xi_{\pm}(y)\big)\,\xi'_{\pm}(y)\,dy
\end{equation}
Now, consider the function $B^{a}_{\pm}(x;h)$ with asymptotics (\ref{ASYmPTOTiCFa}), and similarly $B^{a'}_{\pm}(x;h)$. The normalized microlocal solutions
$u^{a}$ and $u^{a'}$, extended uniquely along $\gamma_{E}$ are denoted $u_{1}$ and $u_{2}$. It is then easy to show that
\begin{equation}\label{2.14}
\begin{aligned}
(u_{1}|B^{a'}_{+}-B^{a'}_{-})&=\frac{i}{2}\,\big(e^{\frac{i}{h}\,A_{-}(x_{E},x'_{E};h)}-e^{\frac{i}{h}\,A_{+}(x_{E},x'_{E};h)}\big)\ \\
(u_{2}|B^{a}_{+}-B^{a}_{-})&=\frac{i}{2}\,\big(e^{-\frac{i}{h}\,A_{-}(x_{E},x'_{E};h)}-e^{-\frac{i}{h}\,A_{+}(x_{E},x'_{E};h)}\big)
\end{aligned}
\end{equation}
modulo $\mathcal{O}(h^{2})$. Here, the generalized actions are:
\begin{align*}
A_{\pm}(x_{E},x'_{E};h)&:=S_{\pm}(x_{E},x;h)-S_{\pm}(x'_{E},x;h)\ \\
&=x_{E}\,\xi_{E}-x'_{E}\,\xi'_{E}+\int_{x_{E}}^{x'_{E}}\xi_{\pm}(y)\,dy-h\,\int_{x_{E}}^{x'_{E}}
\frac{p_{1}\big(y,\xi_{\pm}(y)\big)}{\partial_{\xi}p_{0}\big(y,\xi_{\pm}(y)\big)}\,dy+
h^{2}\,\int_{x_{E}}^{x'_{E}}T_{1}\big(\xi_{\pm}(y)\big)\,\xi'_{\pm}(y)\,dy
\end{align*}
The Gram matrix $G^{(a,a')}(E)$ of the solutions $u_1, u_2$ in the basis $\big(B^a_+-B^a_-, B^{a'}_+-B^{a'}_-\big)$ is given by:
\begin{equation}\label{MaTrIcEgRaMm}
G^{(a,a')}(E)=\left(
       \begin{array}{cc}
         1 & \frac{i}{2}\,\big(e^{-\frac{i}{h}\,A_{-}(x_{E},x'_{E};h)}-e^{-\frac{i}{h}\,A_{+}(x_{E},x'_{E};h)}\big)\\
         \frac{i}{2}\,\big(e^{\frac{i}{h}\,A_{-}(x_{E},x'_{E};h)}-e^{\frac{i}{h}\,A_{+}(x_{E},x'_{E};h)}\big) & -1\\
       \end{array}
     \right)
\end{equation}
whose determinant is:
$$-\cos^{2}\big(\frac{A_{-}(x_{E},x'_{E};h)-A_{+}(x_{E},x'_{E};h)}{2h}\big)$$
This determinant vanishes precisely at the eigenvalues of $P$ in $I$ , leading to the condition modulo $\mathcal{O}(h^{3})$:
\begin{align}
2\,\pi\,n\,h&=\int_{x'_{E}}^{x_{E}}\big(\xi_{+}(y)-\xi_{-}(y)\big)\,dy-\pi\,h
-h\,
\int_{x'_{E}}^{x_{E}}\Big(\frac{p_{1}\big(y,\xi_{+}(y)\big)}{\partial_{\xi}p_{0}
\big(y,\xi_{+}(y)\big)}-
\frac{p_{1}\big(y,\xi_{-}(y)\big)}{\partial_{\xi}p_{0}\big(y,\xi_{-}(y)\big)}\Big)\,
dy\nonumber\ \\
&+h^{2}\,\int_{x'_{E}}^{x_{E}}\Big(T_{1}\big(\xi_{+}(y)\big)\,\xi'_{+}(y)-
T_{1}\big(\xi_{-}(y)\big)\,\xi'_{-}(y)\Big)dy,\;\; n\in\mathbb{Z}\label{RCBSG}
\end{align}
We now aim to simplify (\ref{RCBSG}). First, observe that:
$$\int_{x'_{E}}^{x_{E}}\big(\xi_{+}(y)-\xi_{-}(y)\big)\,dy=\oint_{\gamma_{E}}\xi(y)\,dy$$
From the second Hamilton-Jacobi equation:
$$dy(t)=\partial_{\xi}p_{0}\big(y(t),\xi(t)\big)\,dt$$
we derive:
$$\int_{x'_{E}}^{x_{E}}\Big(\frac{p_{1}\big(y,\xi_{+}(y)\big)}{\partial_{y}p_{0}\big(y,\xi_{+}(y)\big)}-
\frac{p_{1}\big(y,\xi_{-}(y)\big)}{\partial_{y}p_{0}\big(y,\xi_{-}(y)\big)}\Big)\,dy=\int_{\gamma_{E}}p_{1}\,dt$$
\begin{align*}
\int_{x'_{E}}^{x_{E}}\Big(T_{1}\big(\xi_{+}(y)\big)\,\xi'_{+}(y)-
T_{1}\big(\xi_{-}(y)\big)\,\xi'_{-}(y)\Big)\,dy&=\oint_{\gamma_{E}}T_{1}\big(\xi(y)\big)\,\xi'(y)\,dy\ \\&=\mathrm{Im}\oint_{\gamma_{E}}\Omega_{_{1}}\big(\xi(y)\big)\ \\
&=\frac{1}{24}\,\frac{d}{dE}\,\int_{\gamma_{E}}\Delta\,dt-
\int_{\gamma_{E}}p_{2}\,dt-\frac{1}{2}\,\frac{d}{dE}
\int_{\gamma_{E}}p_{1}^{2}\,dt
\end{align*}
In conclusion, the generalized Bohr-Sommerfeld quantization rule at second order for an $h$-pseudo-differential operator of the form (\ref{A01WEyLOO}) is given by:
\begin{equation}\label{REGLESEMIGENERALI}
\oint_{\gamma_{E}}\xi(x)\,dx+\Big(-\pi-\int_{\gamma_{E}}p_{1}\,dt\Big)\,h
+h^{2}\,\Big(\frac{1}{24}\,\frac{d}{dE}\,\int_{\gamma_{E}}\Delta\,dt-
\int_{\gamma_{E}}p_{2}\,dt-\frac{1}{2}\,\frac{d}{dE}
\int_{\gamma_{E}}p_{1}^{2}\,dt\Big)=2\,\pi\,n\,h+\mathcal{O}(h^{3})
\end{equation}
\section{Bohr-Sommerfeld and Action-Angle Variables}\label{ACTIO0NANGLE}
Here, we present a simpler approach based on Birkhoff normal form and the monodromy operator, 
reminiscent of \cite{HR84}. Let $P$ be self-adjoint as in (\ref{A01WEyLOO}), with Weyl symbol $p\in S^{0}(m)$, and such that there exists a topological ring $\mathcal{A}$ where $p_{0}$ satisfies the hypotheses $(H_{1})$, $(H_{2})$ and $(H_{3})$ in the Introduction. Without loss of generality, we can assume that $p_{0}$ has a periodic orbit $\gamma_{0}\subset \mathcal{A}$ with period $2\pi$ and energy $E=E_{0}$. From Hamilton-Jacobi theory \big(see \cite{Argyres}\big), there exists a smooth canonical transformation $(t,\tau)\mapsto \kappa (t,\tau)=(x,\xi)$, $t\in [0,2 \pi]$, defined in a neighborhood of $\gamma_{0}$, and a smooth function $\tau\mapsto f_{0}(\tau)$, with $f_{0}(0)=0$ and $f'_{0}(0)=1$, such that
\begin{equation}\label{FNBK0}
p_{0}\circ \kappa(t,\tau)=f_{0}(\tau)
\end{equation}
This transformation is given by generating function $S(\tau,x)=\displaystyle\int_{x_{E}}^{x}\xi(y)\,dy$, where $\xi(x)=\partial_{x}S(\tau,x)$, $\varphi=\partial_{\tau}S(\tau,x)$, and:
$$p_{0}\big(x,\partial_{x}S(\tau,x)\big)=f_{0}(\tau)$$
Energy $E$ and momentum $\tau$ are related by the one-to-one transformation $E=f_{0}(\tau)$, with $f'_{0}(E_{0})=1$.

This map can be quantized semi-classically, known as the semi-classical Birkhoff normal form (BNF). Here, we take advantage of the fact \big(see \cite{CdV1}, Proposition 2\big) that we can smoothly deform $p$ in the interior of annulus $\mathcal{A}$, without changing its semi-classical spectrum in $I$, such that the "new" $p_{0}$ has a non-degenerate minimum at $(x_{0},\xi_{0})$, while all energies $E\in ]0,E_{+}]$ remain regular. The BNF is achieved by introducing "harmonic oscillator" coordinates $(y,\eta)$, so that (\ref{FNBK0}) becomes:
\begin{equation}\label{FNBK0HARMon}
p_{0}\circ \kappa(y,\eta)=f_{0}\big(\frac{1}{2}\,(\eta^{2}+y^{2})\big),
\end{equation}
and $U^{*} P U=f\Big(\frac{1}{2}\,\big((hD_{y})^{2}+y^{2}\big);h\Big)$, has full symbol $f(\tau;h)=f_{0}(\tau)+h\,f_{1}(\tau)+h^{2}\,f_{2}(\tau)+\cdots$. Here, $f_{1}$ includes the Maslov correction 1/2, and $U$ is a microlocally unitary $h$-FIO operator associated with $\kappa$ \big(see \cite{quadzero11}, \cite{HARPE0RIII}\big). In $\mathcal{A}$, $\tau\neq0$, so we can make the smooth symplectic change of coordinates $y=\sqrt{2\tau}\,\cos(t)$, $\eta=\sqrt{2\tau}\,\sin(t)$, and transform $\frac{1}{2}\,\big((hD_{y})^{2}+y^{2}\big)$ back to $hD_{t}$.

We do not provide explicit expressions for $f_{j}(\tau), j\geq 1$, in terms of $p_{j}$, but note that $f_{j}$ depends linearly on $p_{0}, p_{1}, \cdots, p_{j}$ and their derivatives. The BNF effectively eliminates focal points. The Poincaré section $t=0$ in $\{f_{0}(\tau)=E\}=f_{0}^{-1}(E)$ reduces to a single point $\Sigma=\{a(E)\}$.

From \big(\cite{HL1MR1}, \cite{HL2MR2}\big), the Poisson operator $\mathcal{K}(t,E)$ solves (globally near $\gamma_{0}$):
\begin{equation}\label{weyL1}
\big(f(hD_{t};h)-E\big)\,\mathcal{K}(t,E)=0,
\end{equation}
and is given in the special 1-D case by the multiplication operator on $L^{2}(\Sigma)\approx \mathbb{C}^{2}$:
\begin{equation}\label{weyL3}
\mathcal{K}(t,E)=e^{\frac{i}{h}\,S(t,E)}\,a(t,E;h),
\end{equation}
where $S(t,E)$ satisfies the eikonal equation $f_{0}\big(\partial_{t}S(t,E)\big)=E, \,S(0,E)=0$, i.e. $S(t,E)=f_{0}^{-1}(E)\,t=\tau\,t$.
Using formula (\ref{2.3}), we have:
\begin{equation}\label{OSCI0}
\begin{aligned}
&e^{-\frac{i}{h}\,S(t,E)}\,\big(f(hD_{t};h)-E\big)\big(a(t,E;h)\,e^{\frac{i}{h}\,S(t,E)}\big)=
\big(f(\partial_{t} S;h)-E\big)\,a
+\frac{h}{i}\,\Big(\partial_{\tau}f\,\partial_{t}a+\frac{1}{2}\,\frac{\partial^{2} S}{\partial t^{2}}\,\frac{\partial^{2} f}{\partial \tau^{2}}\,a\Big)\cr
&-h^{2}\,\Big(\frac{1}{8}\,\big(\frac{\partial^{2} S}{\partial t^{2}}\big)^{2}\,\frac{\partial^{4} f}{\partial \tau^{4}}\,a+
\frac{1}{2}\,\frac{\partial^{2} S}{\partial t^{2}}\,\frac{\partial^{3} f}{\partial \tau^{3}}\,\partial_{t} a+\frac{1}{2}\,
\frac{\partial^{2} f}{\partial \tau^{2}}\,\frac{\partial^{2} a}{\partial t^{2}}+\frac{1}{6}\,\frac{\partial^{3} S}
{\partial t^{3}}\,\frac{\partial^{3} f}{\partial \tau^{3}}\,a\Big)+\mathcal{O}(h^{3})
\end{aligned}
\end{equation}
where we have simplified the notation by setting:
$$S=S(t,E),\quad a=a(t,E;h),\quad\partial_{\tau}f=(\partial_{\tau}f)\big(\partial_{t} S(t,E);h\big),\quad
\frac{\partial^{2} f}{\partial \tau^{2}}=
\frac{\partial^{2} f}{\partial \tau^{2}}\big(\partial_{t} S(t,E);h\big),\quad
\frac{\partial^{3} f}{\partial \tau^{3}}=\frac{\partial^{3} f}{\partial \tau^{3}}
\big(\partial_{t} S(t,E);h\big),\ldots$$
From relation
$$\frac{\partial^{2} S(t,E)}{\partial t^{2}}=0=\frac{\partial^{3} S(t,E)}{\partial t^{3}},$$
it follows that
\begin{equation}
\begin{aligned}\label{OSCI01}
e^{-\frac{i}{h}\,S(t,E)}\,\big(f(hD_{t};h)-E\big)\big(a(t,E;h)\,e^{\frac{i}{h}\,S(t,E)}\big)&=
\big(f(\tau;h)-E\big)\,a(t,E;h)+\frac{h}{i}\,\partial_{\tau}f(\tau;h)\,\partial_{t}a(t,E;h)\ \\
&-\frac{h^{2}}{2}\,\frac{\partial^{2} f(\tau;h)}{\partial \tau^{2}}\,\frac{\partial^{2} a(t,E;h)}{\partial t^{2}}+\mathcal{O}(h^{3})
\end{aligned}
\end{equation}
If the eikonal equation is satisfied, we obtain by eliminating the $h$ term in (\ref{OSCI01}) the first transport equation:
\begin{equation}\label{TRANSPoRT1}
f'_{0}(\tau)\,\partial_{t} a_{0}(t,E)+i\,f_{1}(\tau)\,a_{0}(t,E)=0
\end{equation}
whose solutions are the functions:
\begin{equation}\label{Sol0}
a_{0}(t,E)=C_{0}\,e^{-i\,t\,f_{1}(\tau)/f'_{0}(\tau)},\quad C_{0}\in \mathbb{R}
\end{equation}
By eliminating the $h^{2}$ term in (\ref{OSCI01}), we obtain the second transport equation:
\begin{equation}\label{TRANSPoRT12}
f'_{0}(\tau)\,\partial_{t} a_{1}(t,E)+i\,f_{1}(\tau)\,a_{1}(t,E)=
-\big(f'_{1}(\tau)\,\partial_{t} a_{0}(t,E)+i\,f_{2}(\tau)\,
a_{0}(t,E)\big)+\frac{i}{2}\,f''_{0}(\tau)\,\frac{\partial^{2} a_{0}(t,E)}{\partial t^{2}}
\end{equation}
The homogeneous equation associated with (\ref{TRANSPoRT12}) is the same as (\ref{TRANSPoRT1}), whose solutions are the functions:
$$t\mapsto C_{1}\,e^{-i\,t\,f_{1}(\tau)/f'_{0}(\tau)},\quad C_{1}\in \mathbb{R}$$
Thus, we seek a particular solution to (\ref{OSCI01}) of the form:
$$t\mapsto D_{1}(t,E)\,e^{-i\,f_{1}(\tau)/f'_{0}(\tau)}$$
Using the method of variation of constants, we find:
\begin{equation}\label{D1tE}
D_{1}(t,E)=i\,C_{0}\,\widetilde{S_{2}}(E)\,t;\quad \widetilde{S_{2}}(E)=\frac{1}{f'_{0}(\tau)}\,\Big(\frac{1}{2}\,\big(\frac{f_{1}^{2}}{f'_{0}}\big)'(\tau)-f_{2}(\tau)\Big)
\end{equation}
which we normalize by setting $D_{1}(0,E)=0$, so that the general solution to (\ref{TRANSPoRT12}) is:
$$a_{1}(t,E)=\big(C_{1}+D_{1}(t,E)\big)\,e^{-i\,t\,f_{1}(\tau)/f'_{0}(\tau)}$$
It follows that:
$$a(t,E;h)=\big(C_{0}+h\,C_{1}+i\,h\,C_{0}\,\widetilde{S_{2}}(E)\,t+\mathcal{O}(h^{2})\big)\,e^{i\,t\,\widetilde{S_{1}}(E)};\quad \widetilde{S_{1}}(E)=-\big(\frac{f_{1}}{f'_{0}}\big)(\tau)$$
and thus:
\begin{equation}\label{KTE}
\mathcal{K}(t,E)=e^{\frac{i}{h}\,\big(S(t,E)+h\,t\,\widetilde{S_{1}}(E)\big)}\,\big(C_{0}+h\,C_{1}+i\,h\,C_{0}\,\widetilde{S_{2}}(E)\,t+\mathcal{O}(h^{2})\big)
\end{equation}
We define the adjoint $\mathcal{K}^{*}(t,E)$ of $\mathcal{K}(t,E)$ by:
$$\mathcal{K}^{*}(t,E)=a^{*}(t,E;h)\,e^{-\frac{i}{h}\,S(t,E)}=\overline{a(t,E;h)}\,e^{-\frac{i}{h}\,S(t,E)}$$ et $$\mathcal{K}^{*}(E)\mathbf{\cdot}=\int\mathcal{K}^{*}(t,E)\mathbf{\cdot}\,dt$$
The "flux norme" on $\mathbb{C}^{2}$, is defined by:
\begin{equation}\label{}
(u|v)_{\chi}=\big(\frac{i}{h}\,[f(hD_{t};h), \chi(t)]\,\mathcal{K}(t,E)u|\mathcal{K}(t,E)v\big)
\end{equation}
with the scalar product of $L^{2}(\mathbb{R}_{t})$ on the RHS, and $\chi \in C^{\infty}(\mathbb{R})$ is a smooth step-function, equal to 0 for $t\leq 0$ an 1 for $t\geq 2\pi$. To normalize the solution $\mathcal{K}(t,E)$, we start from:
$$\mathcal{K}^{*}(E)\,\frac{i}{h}\,[f(hD_{t};h), \chi(t)]\,\mathcal{K}(t,E)=\mathrm{Id}_{L^{2}(\mathbb{R})}$$
The Weyl symbol of $\displaystyle \frac{i}{h}\,[f(hD_{t};h), \chi(t)]$ is $$c(t,\tilde{\tau};h)=\{f(\tilde{\tau};h), \chi(t) \}=\big(f'_{0}(\tilde{\tau})+h\,f'_{1}(\tilde{\tau})\big)\,\chi'(t)+\mathcal{O}(h^{2})=Q(\tilde{\tau};h)\,\chi'(t)+\mathcal{O}(h^{2})
\quad\text{(with $hD_{t}$ quantizing $\tilde{\tau}$)}$$
where we have set:
$$Q(\tilde{\tau};h)=f'_{0}(\tau)+h\,f'_{1}(\tau)$$
We set:
\begin{equation}\label{}
I(t,E):=\frac{i}{h}\,[f(hD_{t};h),\chi(t)]\,\mathcal{K}(t,E)=(2\,\pi\,h)^{-1}\,\int\int e^{\frac{i}{h}\,\big((t-s)\,\tilde{\tau}+S(s,E)\big)}\,Q(\tilde{\tau};h)\,\chi'(\frac{t+s}{2})\,a(s,E;h)\,ds\,d\tilde{\tau}
\end{equation}
For fixed $t$, the phase corresponding to the oscillatory integral defining $I(t,E)$ is:
\begin{equation*}
\varphi_{t}(s,\tilde{\tau})=(t-s)\,\tilde{\tau}+S(s,E)
\end{equation*}
whose critical points are:
\begin{equation*}
\big(s_{c}(t),\tilde{\tau}_{c}(t)\big)=\big(t,\partial_{t} S(t,E)\big)=(t,\tau)
\end{equation*}
and the corresponding critical values are:
\begin{equation*}
\varphi_{t}\big(s_{c}(t),\tilde{\tau}_{c}(t)\big)=S(t,E)
\end{equation*}
A direct calculation shows that:
\begin{equation*}
\mathrm{Hess} \varphi_{t}(s,\tilde{\tau})=
\left(
\begin{array}{cc}
0 & -1 \\
-1 & 0 \\
\end{array}
\right)
\end{equation*}
We set $u_{t}(s,\tilde{\tau};h)=Q(\tilde{\tau};h)\,\chi'(\displaystyle \frac{t+s}{2})\,a(s,E;h)$. By the stationary phase theorem (\ref{Phasestat0}), we obtain:
\begin{equation}\label{}
I(t,E)=e^{\frac{i}{h}\,S(t,E)}\,\big(u_{t}(t,\tau;h)+h\,L_{1}u_{t}(t,\tau;h)+\mathcal{O}(h^{2})\big)
\end{equation}
with:
\begin{equation}\label{}
L_{1}u_{t}\big(t,\tau;h\big)=\sum_{n=0}^{2}\frac{2^{-(n+1)}}{i\,n!\,(n+1)!}\,\big<\left(
\begin{array}{cc}
0 & -1 \\
-1 & 0 \\
\end{array}
\right)^{-1} \left(
               \begin{array}{c}
                 D_{s} \\
                 D_{\tilde{\tau}} \\
               \end{array}
             \right)
, \left(
               \begin{array}{c}
                 D_{s} \\
                 D_{\tilde{\tau}} \\
               \end{array}
             \right)\big>^{n+1}(\phi_{t}^{n}\,u_{t})(t,\tau;h)
\end{equation}
A quick calculation shows that:
$$\big<\left(
\begin{array}{cc}
0 & -1 \\
-1 & 0 \\
\end{array}
\right)^{-1} \left(
               \begin{array}{c}
                 D_{s} \\
                 D_{\tilde{\tau}} \\
               \end{array}
             \right)
, \left(
               \begin{array}{c}
                 D_{s} \\
                 D_{\tilde{\tau}} \\
               \end{array}
             \right)\big>=2\,\frac{\partial^{2}}{\partial s\,\partial \tilde{\tau}}$$
and that:
$$\phi_{t}(s,\tilde{\tau})=\varphi_{t}(s,\tilde{\tau})-\varphi_{t}(t,\tau)-\frac{1}{2}\,\big<\left(
\begin{array}{cc}
0 & -1 \\
-1 & 0 \\
\end{array}
\right) \left(
               \begin{array}{c}
                 s-t \\
                 \tilde{\tau}-\tau\\
               \end{array}
             \right)
, \left(
               \begin{array}{c}
                 s-t \\
                 \tilde{\tau}-\tau \\
               \end{array}
             \right)\big>=0$$
It follows that:
$$L_{1}u_{t}\big(t,\tau;h\big)=-i\,(\frac{\partial^{2} u_{t}}{\partial s\,\partial \tilde{\tau}})(t,\tau;h)$$
Thus:
\begin{equation}\label{}
I(t,E)=e^{\frac{i}{h}\,S(t,E)}\,\Big[Q(\tau;h)\,\chi'(t)\,a(t,E;h)-i\,\frac{h}{2}\,\partial_{\tau} Q(\tau;h)\,\chi''(t)\,a(t,E;h)-i\,h\,\partial_{\tau} Q(\tau;h)\,\chi'(t)\,\partial_{t}a(t,E;h)+\mathcal{O}(h^{2})\Big]
\end{equation}
which gives:
\begin{align*}
(u|v)_{\chi}&=\big(I(t,E)\,u|\mathcal{K}(t,E)\,v\big)=u\,\overline{v}\,\int_{0}^{2\,\pi}I(t,E)\,\overline{\mathcal{K}(t,E)}\,dt\ \\
            &=u\,\overline{v}\,\Big(C^{2}_{0}\,f'_{0}(\tau)+h\,\big(2\,C_{0}\,C_{1}\,f'_{0}(\tau)+C^{2}_{0}\,f'_{1}(\tau)+
            C^{2}_{0}\,\widetilde{S_{1}}(E)\,f''_{0}(\tau)\big)+\mathcal{O}(h^{2})\Big)
\end{align*}
We should therefore have:
$$C^{2}_{0}\,f'_{0}(\tau)=1$$
and
$$2\,C_{0}\,C_{1}\,f'_{0}(\tau)+C^{2}_{0}\,f'_{1}(\tau)+C^{2}_{0}\,\widetilde{S_{1}}(E)\,f''_{0}(\tau)=0$$
If we choose $C_{0}>0$, we have:
\begin{equation}\label{CZERO}
C_{0}=C_{0}(\tau)=(f'_{0}(\tau))^{-1/2}
\end{equation}
and
\begin{equation}\label{CINDICE1}
C_{1}=C_{1}(\tau)=-\frac{1}{2}\,(f'_{0}(\tau))^{-1/2}\,(\frac{f_{1}}{f'_{0}})'(\tau)
\end{equation}
we end up with $(u|v)_{\chi}=u\,\overline{v}\,\big(1+\mathcal{O}(h^{2})\big)$, which normalizes $\mathcal{K}(t,E)$ to order 2.

We define $\mathcal{K}_{0}(t,E)=\mathcal{K}(t,E)$ (Poisson operator with data at $t=0$), and $\mathcal{K}_{2\,\pi}(t,E)=\mathcal{K}_{0}(t-2\,\pi,E)$ (Poisson operator with data at $t=2\,\pi$).

The energy $E$ is an eigenvalue of the operator $f(hD_{t};h)$ if and only if 1 is an eigenvalue of the Monodromy operator
$$M(E)=\mathcal{K}^{*}_{2\,\pi}(E)\,I(t,E)=\int\mathcal{K}^{*}_{0}(t-2\,\pi,E)\,I(t,E)\,dt$$
Note that in dimension 1, the monodromy operator $M(E)$ reduces to a multiplication operator.

A few lines of calculation then show that:
\begin{align}
M(E)&=\exp(2\,i\,\pi\,\tau/h)\,\exp\big(2\,i\,\pi\,\widetilde{S_{1}}(E)\big)\,\big(1+2\,i\,\pi\,h\,\widetilde{S_{2}}(E)+\mathcal{O}(h^{2})\big)\nonumber\ \\
    &=\exp\Big[\frac{i}{h}\,\big(2\,\pi\,\tau+2\,\pi\,h\,\widetilde{S_{1}}(E)+2\,\pi\,h^{2}\,\widetilde{S_{2}}(E)\big)\Big]
\end{align}
The Bohr-Sommerfeld quantization rule is written as:
$$f_{0}^{-1}(E)+h\,\widetilde{S_{1}}(E)+h^{2}\,\widetilde{S_{2}}(E)+\mathcal{O}(h^{3})=n\,h,\quad n\in \mathbb{Z}$$
Let $S_{1}(E)=2\,\pi\,\widetilde{S_{1}}(E)$, $S_{2}(E)=2\,\pi\,\widetilde{S_{2}}(E)$. Then, since $f_{0}^{-1}(E)=\tau(E)=\displaystyle \frac{1}{2\,\pi}\,\oint_{\gamma_{E}}\xi(x)\,dx=\displaystyle \frac{1}{2\,\pi}\,S_{0}(E)$, and we known that $S_{3}(E)=0$, we obtain:
\begin{equation}\label{BSACANG}
S_{0}(E)+h\,S_{1}(E)+h^{2}\,S_{2}(E)+\mathcal{O}(h^{4})=2\,\pi\,n\,h,\quad n\in \mathbb{Z}
\end{equation}
with:
\begin{equation}
S_{1}(E)=-2\,\pi\,\big(\frac{f_{1}}{f'_{0}}\big)(\tau),\quad S_{2}(E)=\frac{2\,\pi}{f'_{0}(\tau)}\,\Big(\frac{1}{2}\,\big(\frac{f_{1}^{2}}{f'_{0}}\big)'(\tau)-f_{2}(\tau)\Big)
\end{equation}
\section{The discrete spectrum of $P$ in $I$}\label{Discrete Spectrum}
Here we recover the fact that BS determines asymptotically all eigenvalues of $P$ in $I$. We adapt the argument of \cite{SjZw}. It is to think of $\{a_{E}\}$ and $\{a'_{E}\}$ as zero dimensional "Poincaré sections" of $\gamma_{E}$. Let $\mathcal{K}_{h}(E)$ be the operator (Poisson operator) that assigns to its "initial value" $C_{0}\in L^{2}\big(\{a_{E}\}\big)\approx \mathbb{R}$ the well normalized solution $u(x;h)=\displaystyle\int e^{\frac{i}{h}\,\big(x\xi+\psi(\xi)\big)}\,b(\xi;h)\,d\xi$ to $\big(P(x,hD_{x})-E\big)u=0$ near $\{a_{E}\}$. By construction, we have:
\begin{equation}\label{DSISPECt1}
\pm \big(\mathcal{L}^{a}(E)\big)^{*}\frac{i}{h}[P,\chi^{a}]_{\pm}\mathcal{L}^{a}(E)=\mathrm{Id}_{a_{E}}=1
\end{equation}
We define object "connecting" $a$ to $a'$ along $\gamma_{E}$ as follows: let $\widetilde{T}=\tilde{T}(E)>0$ such that $\exp \widetilde{T}H_{p_{0}}(a)=a'$ \big(in case $p_{0}$ is invariant by time reversal, i.e. $p_{0}(x,\xi)=p_{0}(x,-\xi)$ we take $\widetilde{T}(E)=T(E)/2$\big).

Choose $\chi_{f}^{a}$ ($f$ for "forward") be a cut-off function supported microlocally near $\gamma_{E}$, equal to 0 along $\exp tH_{p_{0}}(a)$ for $t\leq \varepsilon$, equal to 1 along $\gamma_{E}$ for $t\in[2\varepsilon, \widetilde{T}+\varepsilon]$, and back to 0 next to $a'$, e.g. for $t\geq \widetilde{T}+2\varepsilon$. Let similarly $\chi_{b}^{a}$ ($b$ for "backward") be a cut-off function supported microlocally near $\gamma_{E}$, equal to 1 along $\exp tH_{p_{0}}(a)$ for $t\in[-\varepsilon, \widetilde{T}-2\varepsilon]$, and equal to 0 next to $a'$, e.g. for $t\geq\widetilde{T}-\varepsilon$. By (\ref{DSISPECt1}) we have
\begin{equation}\label{DSISPECt2}
\big(\mathcal{L}^{a}(E)\big)^{*}\frac{i}{h}[P,\chi^{a}]_{+} \mathcal{L}^{a}(E)=
\big(\mathcal{L}^{a}(E)\big)^{*}\frac{i}{h}[P,\chi^{a}_{f}] \mathcal{L}^{a}(E)=1
\end{equation}
\begin{equation}\label{DSISPECt3}
-\big(\mathcal{L}^{a}(E)\big)^{*}\frac{i}{h}[P,\chi^{a}]_{-}\mathcal{L}^{a}(E)=-
\big(\mathcal{L}^{a}(E)\big)^{*}\frac{i}{h}[P,\chi^{a}_{b}] \mathcal{L}^{a}(E)=1
\end{equation}
which define a left inverse $\mathcal{R}^{a}_{+}(E)=\displaystyle\big(\mathcal{L}^{a}(E)\big)^{*}\frac{i}{h}[P,\chi^{a}_{f}]$ to $\mathcal{L}^{a}(E)$ and a right inverse $\mathcal{R}^{a}_{-}(E)=-\displaystyle\frac{i}{h}[P,\chi^{a}_{b}] \mathcal{L}^{a}(E)$ to $\big(\mathcal{L}^{a}(E)\big)^{*}$. We define similar objects connecting $a'$ to $a$, $\widetilde{T}'=\widetilde{T}'(E)>0$ such that $\exp \widetilde{T}'H_{p_{0}}(a)=a'$ \big($\widetilde{T}=\widetilde{T}'$ if $p_{0}$ is invariant by time reversal\big), in particular a left inverse $\mathcal{R}^{a'}_{+}(E)=\displaystyle\big(\mathcal{L}^{a'}(E)\big)^{*}\frac{i}{h}[P,\chi^{a'}_{f}]$ to $\mathcal{L}^{a'}(E)$ and a right inverse $\mathcal{R}^{a'}_{-}(E)=-\displaystyle\frac{i}{h}[P,\chi^{a'}_{b}] \mathcal{L}^{a'}(E)$ to $\big(\mathcal{L}^{a'}(E)\big)^{*}$, with the additional requirement
\begin{equation}\label{DSISPECt4}
\chi^{a}_{b}+\chi^{a'}_{b}=1
\end{equation}
near $\gamma_{E}$. Define now the pair $\mathcal{R}_{+}(E)u=\big(\mathcal{R}^{a}_{+}(E)u,\,\mathcal{R}^{a'}_{+}(E)u\big)$, $u\in L^{2}(\mathbb{R})$ and $\mathcal{R}_{-}(E)$ by $\mathcal{R}_{-}(E)u_{-}=\mathcal{R}^{a}_{-}(E)u^{a}_{-}+\mathcal{R}^{a'}_{-}(E)u^{a'}_{-}$,\,$u_{-}=(u^{a}_{-},\,u^{a'}_{-})\in \mathbb{C}^{2}$, we call \emph{Grushin operator} $\mathcal{P}(z)$ the operator defined by the linear system
\begin{equation}\label{DSISPECt5}
\frac{i}{h}(\mathcal{P}-z)u+\mathcal{R}_{-}(z)u_{-}=v,\quad \mathcal{R}_{+}(z)u=v_{+}
\end{equation}
From \cite{SjZw}, we know that the problem (\ref{DSISPECt5}) is well posed, and
\begin{equation}\label{DSISPECt6}
\big(\mathcal{P}(z)\big)^{-1}=\left(
                                \begin{array}{cc}
                                  \mathcal{E}(z) & \mathcal{E}_{+}(z) \\
                                  \mathcal{E}_{-}(z) & \mathcal{E}_{-+}(z) \\
                                \end{array}
                              \right)
\end{equation}
with $(P-z)^{-1}=\mathcal{E}(z)-\mathcal{E}_{+}(z)\,\big(\mathcal{E}_{-+}(z)\big)^{-1}\,\mathcal{E}_{-}(z)$. Actually one can show that the effective Hamiltonian $\mathcal{E}_{-+}(z)$ is singular precisely when 1 belongs to the spectrum of the monodromy operator, or when the
microlocal solutions $u_{1}, u_{2}\in \mathcal{K}_{h}(E)$ computed in (\ref{2.14}) are colinear, which amounts to say that Gram matrix (\ref{MaTrIcEgRaMm}) is singular. There follows that the spectrum of $P$ in $I$ is precisely the set of $z$ we have determined by BS quantization rule.\newpage
\section*{Appendix. Weyl quantization and conjugation by Fourier integral operators}
To fix the ideas we choose the real case (analytic or $C^{\infty}$). To start with, we consider the simple case of conjugation by an elleptic factor. Let $p(x,\xi;h)\sim p_{0}(x,\xi)+h\,p_{1}(x,\xi)+\cdots$ be a classical symbol of order 0, defined near $(x_{0},\xi_{0})\in \mathbb{R}^{2}$. Let $\varphi(x)$ be a real valued smooth function, defined near $x_{0}$, such that $\varphi'(x_0)=\zeta_0$. Let $P(x,hDx;h)=p^{w}(x,hDx;h)$ be the Weyl $h$-quantization of $p(x,\xi;h)$. We are then interested in the Weyl symbol of the pseudodifferential operator $Q=e^{-\frac{i}{h}\,\varphi(x)}\circ P \circ e^{\frac{i}{h}\,\varphi(x)}$, which is defined near $(x_0,\xi_0-\zeta_0)$. We proceed formally, by first writing the integral kernel of $Q$ as,
\begin{equation}\label{Noyau1}
K_{Q}(x,y)=\int e^{\frac{i}{h}\,
\big((x-y)\,\theta+\varphi(y)-\varphi(x)\big)}\,p\big(\frac{x+y}{2},\theta;h\big)\,
\widetilde{d\theta},\quad\widetilde{d\theta}=(2\,\pi\,h)^{-1}\,d\theta
\end{equation}
It is well known (Kuranishi's Trick) that
\begin{equation*}
\varphi(x)-\varphi(y)=F(x,y)\,(x-y)
\end{equation*}
with
\begin{equation*}
F(x,y)=\varphi'_{x}(\frac{x+y}{2})+\mathcal{O}(x-y)
\end{equation*}
The change of variables $\theta\mapsto \theta-F(x,y)$ then allows us to write
\begin{equation}\label{Noyau2}
K_Q(x,y)=\int e^{\frac{i}{h}\,(x-y)\,\theta}\,p\big(\frac{x+y}{2},\theta+F(x,y);h\big)
\,\widetilde{d\theta}
\end{equation}
We have
\begin{equation*}
p\big(\frac{x+y}{2},\theta+F(x,y);h\big)=
p\big(\frac{x+y}{2},\theta+\varphi'_{x}(\frac{x+y}{2});h\big)+r\big(x,y,\theta\big)\,(x-y)^{2}
\end{equation*}
where
\begin{equation*}
r(x,y,.)\in \mathcal{S}(\mathbb{R})
\end{equation*}
By integrating by parts, we obtain:
\begin{equation*}
\int e^{\frac{i}{h}\,(x-y)\,\theta}\,r(x,y,\theta)\,(x-y)^{2}
\,\widetilde{d\theta}=\mathcal{O}(h^{2})
\end{equation*}
Here the $\mathcal{O}$ term in (\ref{Noyau2}) contributes to the Weyl symbol of $Q$ by $\mathcal{O}(h^{2})$ (i.e., by a classical symbol of order -2). Therefore, if we denote by $q$ the Weyl symbol of $Q$, we have:
\begin{equation*}
q(x,\xi;h)=p\big(x,\xi+\varphi'(x);h\big)+\mathcal{O}(h^{2})
\end{equation*}
If $a=a(x;h)\sim a_{0}(x)+h\,a_{1}(x)+\cdots$ is a classical symbol defined near $x_{0}$, then:
\begin{equation}
\begin{aligned}\label{QAa01}
Qa(x;h)&=\big(e^{-\frac{i}{h}\,\varphi(x)}\circ P \circ e^{\frac{i}{h}\,\varphi(x)}\big)\big(a(x;h)\big)=e^{-\frac{i}{h}\,\varphi(x)}P\big(e^{\frac{i}{h}\,\varphi(x)}a(x;h)\big)\ \\
       &=(2\pi h)^{-1}\,\int\int e^{\frac{i}{h}\,(x-y)\,\theta}\,p\big(\frac{x+y}{2},\theta+F(x,y);h\big)\,a(y;h)\,dy\,d\theta
\end{aligned}
\end{equation}
Thus, the operator $e^{-\frac{i}{h}\,\varphi(x)}\circ P \circ e^{\frac{i}{h}\,\varphi(x)}$ is a semi-classical pseudo-differential operator.

Recall that in dimension 1, we have:
\begin{equation}\label{A1}
(2\pi h)^{-1}\,\int\int e^{-\frac{i}{h}\,z\,\theta}\,u(z,\theta)\,dz\,d\theta\sim\sum_{k=0}^{2}\frac{h^{k}}{k!\,i^{k}}\big((\partial_{z} \partial_{\theta})^{k}u\big)(0,0)+\mathcal{O}(h^{3})
\end{equation}
In (\ref{QAa01}), we perform the change of variables $(z, \theta)=(y-x, \theta)$. The Jacobian of this transformation is equal to 1.

Now, for a fixed $x$, we expand using the stationary phase formula (\ref{A1}), yielding:
\begin{equation}\label{A3}
Qa(x;h)=(2\pi h)^{-1}\,\int\int e^{-\frac{i}{h}\,z\,\theta}\,p\big(x+\frac{z}{2},\theta+F(x,z+x);h\big)\,a(z+x;h)\,dz\,d\theta
\end{equation}
Let
$$u(z,\theta)=p\big(x+\frac{z}{2},\theta+F(x,z+x);h\big)\,a(z+x;h)$$
We then have:
$$Qa(x;h)\sim u(0,0)+\frac{h}{i}\,(\frac{\partial^{2} u}{\partial z\,\partial \theta})(0,0)-\frac{h^{2}}{2}\,
(\frac{\partial^{4} u}{\partial z^{2}\,\partial \theta^{2}})(0,0)+\mathcal{O}(h^{3})$$
Moreover:
\begin{equation*}
F(x,x)=\int_{0}^{1}\varphi'(x)\,dt=\varphi'(x)
\end{equation*}
\begin{equation*}
\partial_{z}\big(F(x, z+x)\big)_{z=0}=\int_{0}^{1}(1-t)\,\varphi''(x)\,dt=\frac{1}{2}\,\varphi''(x)
\end{equation*}
\begin{equation*}
\frac{\partial^{2}}{\partial z^{2}}\big(F(x, z+x)\big)_{z=0}=\int_{0}^{1}(1-t)^{2}\,\varphi'''(x)\,dt=\frac{1}{3}\,\varphi'''(x)
\end{equation*}
A straightforward calculation shows that:
$$u(0,0)=p\big(x,\varphi'(x);h\big)\,a(x;h)$$
$$(\frac{\partial^{2} u}{\partial z\,\partial \theta})(0,0)=\beta(x;h)\,\partial_{x}a(x;h)+\frac{1}{2}\,\partial_{x}\beta(x;h)\,a(x;h)$$
where we have defined
$$\beta(x;h)=(\frac{\partial p}{\partial \xi})\big(x,\varphi'(x);h\big)$$
and
$$(\frac{\partial^{4} u}{\partial z^{2}\,\partial \theta^{2}})(0,0)=\frac{1}{4}\,\partial_{x}r(x;h)\,a(x;h)+\frac{1}{4}\,\varphi''(x)\,\partial_{x}\theta(x;h)\,a(x;h)+
\partial_{x}\gamma(x;h)\,\partial_{x}a(x;h)+\gamma(x;h)\,\frac{\partial^{2} a(x;h)}{\partial x^{2}}+\frac{1}{3}\,\varphi'''(x)\,
\theta(x;h)\,a(x;h)$$
where we have set
\begin{equation*}
r(x;h)=(\frac{\partial^{3} p}{\partial x \partial \xi^{2}})\big(x,\varphi'(x)\big),\quad\gamma(x;h)=(\frac{\partial^{2} p}{\partial \xi^{2}})\big(x,\varphi'(x);h\big),\quad\theta(x;h)=(\frac{\partial^{3} p}{\partial \xi^{3}})\big(x,\varphi'(x);h\big)
\end{equation*}
Thus, we obtain modulo $\mathcal{O}(h^{3})$:
\begin{equation}\label{05OSCI0}
\begin{aligned}
&(Q-E)\big(a(x;h)\big)=
\Big(p\big(x,\varphi'(x);h\big)-E\Big)\,a(x;h)
+\frac{h}{i}\,\Big(\beta(x;h)\,\partial_{x}a(x;h)+\frac{1}{2}\,\partial_{x}\beta(x;h)\,a(x;h)\Big)\cr
&-h^{2}\,\Big(\frac{1}{8}\,\partial_{x}r(x;h)\,a(x;h)+\frac{1}{8}\,\varphi''(x)\,\partial_{x}\theta(x;h)\,a(x;h)+\frac{1}{2}\,
\partial_{x}\gamma(x;h)\,\partial_{x}a(x;h)+\frac{1}{2}\,\gamma(x;h)\,\frac{\partial^{2} a(x;h)}{\partial x^{2}}+\frac{1}{6}\,\varphi'''(x)\,
\theta(x;h)\,a(x;h)\Big)
\end{aligned}
\end{equation}
\nocite{*}
\bibliographystyle{plain}
\bibliography{BSQR2025}
\addcontentsline{toc}{chapter}{Bibliographie}
\end{document}